\newtheorem{theorem}{Theorem}
\newtheorem{lemma}{Lemma}
\newtheorem{corollary}{Corollary}
\newtheorem{proposition}{Proposition}
\newtheorem{remark}{Remark}
\newtheorem{example}{Example}
\newtheorem{assumption}{Assumption}
\newtheorem{definition}{Definition}
 \newcommand{\R}{{\mathbb{R}}}
\newcommand{\Z}{{\mathbb Z}}
\newcommand{\rhoo}{{\rho}}
\newcommand{\ie}{{\it i.e.}}
\DeclareMathOperator{\Ker}{ker}
\DeclareMathOperator{\Image}{Im}
\DeclareMathOperator{\card}{card}
\DeclareMathOperator{\Rank}{rank}
\DeclareMathOperator{\GreedyStatic}{GreedyStatic}
\DeclareMathOperator{\GreedyTimeVarying}{GreedyTimeVarying}
\DeclareMathOperator{\tr}{\mathop{Trace}}
\DeclareMathOperator{\E}{\mathbf E}
\DeclareMathOperator{\diag}{diag}
\DeclareMathOperator{\DualSet}{DualSet}
\begin{document}
\title{Deterministic and Randomized Actuator Scheduling\\ With Guaranteed Performance Bounds}

\author{ M. Siami, A. Olshevsky,  and A. Jadbabaie\\ 
\thanks{This research was supported in part by a Vannevar Bush Fellowship from the Office of Secretary of Defense.}
\thanks{M. Siami is with the Department of Electrical and Computer Engineering, Northeastern University, Boston, MA 02115 USA (e-mail:  {\tt\small m.siami@northeastern.edu}). A. Olshevsky is with the Department of Electrical and Computer Engineering, Boston University, Boston, MA 02215 USA (e-mail:  {\tt\small alexols@bu.edu}). A. Jadbabaie is with the Institute for Data, Systems, and Society, Massachusetts Institute of Technology, Cambridge, MA 02139 USA (e-mail:  {\tt\small jadbabai@mit.edu}).}
}

\maketitle

\begin{abstract}
In this paper, we investigate the problem of actuator selection for linear dynamical systems. We develop a framework to design a sparse actuator schedule for a given large-scale linear system with guaranteed performance bounds using deterministic polynomial-time and randomized approximately linear-time algorithms.  First, we introduce systemic controllability metrics for linear dynamical systems that are monotone and homogeneous with respect to the controllability Gramian. We show that several popular and widely used optimization criteria in the literature belong to this class of controllability metrics. Our main result is to provide a polynomial-time actuator schedule that on average selects only a constant number of actuators at each time step, independent of the dimension, to furnish a guaranteed approximation of the controllability metrics in comparison to when all actuators are in use. Our results naturally apply to the dual problem of sensor selection, in which we provide a guaranteed approximation to the observability Gramian. We illustrate the effectiveness of our theoretical findings via several numerical simulations using benchmark examples.
\end{abstract}

\allowdisplaybreaks

\section{Introduction}

Over the past few years, controllability and observability properties of complex dynamical networks have been subjects of intense study in the controls community~\cite{Alex2014,pasqualetti2014controllability, liu2016control, 7962975,muller1972analysis, tzoumas2016minimal,7171062, segio2017,NozariACC,yaziciouglu2016graph,summers2016submodularity, pequito2015complexity}. This interest stems from the need to steer or observe the state of  large-scale, networked systems such as the power grids \cite{chakrabortty2011control}, social networks, biological and genetic regulatory networks \cite{ChandraBD11,marucci2009turn,rajapakse2012can}, and traffic networks \cite{SiamiTAC18}.
While the classical notion of  controllability and observability, introduced by  Kalman in \cite{kalman1963mathematical} is quite well understood, the dependence of various measures of controllability or observability on number and location of sensors and actuators in linear systems have been subject of study for nearly 5 decades \cite{ATHANS1972397}. Often times, there is a need to steer or estimate the state of a large-scale, networked control system with as few actuators/sensors as possible, due to issues related to cost and energy depletion. The desire to perform control/estimation using a sparse set of actuators/sensors spans various application domains, ranging from infrastructure networks (e.g., water and power networks) to multi-robot systems and the study of the human connectome. For example, energy conservation through efficient utilization of sensors and actuators can help extend the duration of battery life  in networks of mobile sensors and multi-agent robotic networks; estimating the whole state of the power grid using fewer measurement units will help reduce the cost of monitoring the network for systemic failures, etc.

It is therefore desirable to have a limited number of sensors and actuators without compromising the control or estimation performance too much. Unfortunately, as the recent works in~\cite{Alex2014,vassili1} have shown,  the problem of finding a sparse set of input variables such that the resulting system is controllable is NP-hard. Even the presumably easier problem of approximating the minimum number better than a constant multiplicative factor of $ \log n$ is also NP-hard. Other results in the  literature have studied network controllability by exploring approximation algorithms for the closely related subset selection problem \cite{Alex2014,summers2016submodularity, pequito2015complexity}. More recently, some of the authors showed that even the problem of finding a sparse set of actuators to guarantee reachability of a particular state is hard and even hard to approximate \cite{jadbabaie2017minimal}.

Previous studies have been mainly focused on solving the optimal sensor/actuator placement problem using the greedy heuristic, as approximations of the corresponding sparse-subset selection problem.
While these results attempt to find approximation algorithms for finding the best sparse subset, our focus in this paper is to gain new fundamental insights into approximating  various controllability metrics compared to the case when all possible actuators are chosen. Specifically, we are interested in actuator/sensor schedules that select a small number of actuators/sensors so as to save the energy while ensuring a suitable level of controllability (observability) performance for the entire network. Due to energy efficiency, we may want to reduce the number of active actuator/sensors at each time. At the same time, we would like to have a  performance that closely resembles that of the  original system, when all available sensor/actuators are active. 
 
We investigate sparse sensor and actuator selection as particular instances where discrete geometric structures can be utilized to study network controllability and observability problems (cf. \cite{vassili1,vassili2,vassili3}). A key observation is the  close connection between this problem and some classical problems in statistics such as outlier detection, active learning, and optimal experimental design. 
In recent years, there has been  a renewed interest in optimal experiment  design which has a long history going back at least 65 years \cite{kempthorne1952design, allen-zhu17e}.

{One of our main contributions is to show that the time-varying actuator selection problem, which goes back to a paper by Athans in 1972 \cite{ATHANS1972397}, can be solved via random sampling.} We propose an alternative to submodularity-based  methods and instead use recent advances in theoretical computer science to develop scalable algorithms for sparsifying control inputs. Current approaches based on polynomial time relaxations of the subset selection problem require an extra multiplicative factor of $\log n$ sensors/actuators times the minimal number in order to just maintain controllability/observability. Using these recent advances \cite{allen-zhu17e, lee2017sdp, Spielman, ramanujan, Rudelson:2007,marcus,srivastava2017alon}, we  show that by carefully designing a scheduling strategy, one can choose on average a constant number of sensors and actuators at each time, to approximate the controllability/observability metrics of the system when all sensors and actuators are in use.

{\bf Potential application domains.} One potential application can be considered as wide-area oscillation damping control using High Voltage DC (HVDC) lines (e.g., \cite{elizondo2018interarea, atawi2013advance,liu2017minimal}). HVDC systems are increasingly being installed in power grids all around the globe. This trend is expected to continue with recent advancements in power electronics technology, energy harvesting, and usage of renewable energy \cite{elizondo2018interarea}. In this setup, it seems quite compelling to examine approaches that can support sparse HVDC lines (\ie, actuators) scheduling to improve the controllability of the power grid in order to account for issues related to cost, energy depletion, and the limitations in directly accessing actuators, especially in large networks (cf. Example \ref{example_2} in Section \ref{sec:numerical}). Moreover, note that the dual problem of actuator scheduling for control is sensor scheduling for estimation. In the present case, our sparse sensor schedule setup is equal to reducing the number of measurements for data reduction, and the observability Gramian-based measures show how well one can estimate the state of the system \cite{boyd2008lecture}.

Another potential application is disease spread estimation in networks where testing resources are scarce.  There are several models for the spread of infections (see \cite{nowzari2016analysis} and references therein). 
Formally, let us consider a network with $n$ nodes. Each node represents a city and has a non-negative scalar state $x_i(k)$ associated with it, which indicates the prevalence of an infection in that node. Since $x_i(t)$ is the {\em proportion} of the population at node $i$ infected at time $t$, we assume that the $x_i(t)$ are close to zero (for example, this is valid for the recent COVID-19 pandemic; even though there are a substantial number of infected people still the proportion of the population that is infected is small as of early-April, 2020\footnote{\url{https://www.cdc.gov/coronavirus/2019-ncov/cases-updates/cases-in-us.html}}). It therefore makes sense to linearize the epidemic models around the zero state.

After linearization, the states evolve according to an autonomous linear differential equation $x(k+1) ~=~ A x(k)$; in all epidemic models, the off-diagonal entries $a_{ij}$ of the state matrix $A$ indicates the unitized transmission rate of the infection from city $j$ to city $i$, while the diagonal entries can be positive or negative, reflecting the possibility of either local spread or recovery.

We assume $y(t) = C(t) x+w(t)$, where $C(t)$ is a ``subset'' of the identity matrix (because one can measure the prevalence of the infection in a node by randomly testing from the population at that node) and $w(t)$ is  noise. The sensor scheduling problem in this context amounts estimating the state with as small a variance as possible, while the measurements have to be done over a certain time-horizon and are bounded in number due to scarce resources.\footnote{From now on we will focus the paper on the actuator selection problem. The dual notion of sensor selection follows similar ideas.}

Some of our results appeared earlier in the conference version of this paper \cite{ECC2018-Milad, ACC2018-Milad}; however, their proofs are presented here for the first time. The manuscript also contains several new results, remarks, numerical examples, and proofs.

\section{Preliminaries and Definitions}
\subsection{Mathematical Notations}
\label{sec:0}
Throughout the paper, discrete time index is denoted by $k$. The sets of real (integer), non-negative real (integer), and positive real (integer) numbers are represented by $\R$ ($\mathbb Z$), $\R_+$ ($\mathbb Z_+$) and $\R_{++}$ ($\mathbb Z_{++}$), respectively. The set of natural numbers $\{i \in \Z_{++} ~:~i \leq n\}$ is denoted by $[n]$. The cardinality of a set $\sigma$ is denoted by $\card(\sigma)$. Capital letters, such as $A$ or $B$, stand for real-valued matrices. 
For a square matrix $X$, $\det(X)$ and $\tr(X)$ refer to the determinant and the summation of on-diagonal elements of $X$, respectively. $\mathbb S^n_+$ is the positive definite cone of $n$-by-$n$ matrices. 
The $n$-by-$n$ identity matrix is denoted by $I$. Notation $A \preceq B$ is equivalent to matrix $B-A$ being positive semi-definite. The transpose of matrix $A$ is denoted by $A^\top$. The rank, kernel and image of matrix $A$ are referred to by $\Rank(A)$, $\ker(A)$ and $\Image(A)$, respectively.  
The Moore-Penrose pseudo-inverse of matrix $A$ is denoted by $A^{\dag}$.  The ceiling function of $x \in \R$ is denoted by $\lceil x\rceil$ where it returns the least integer greater than or equal to $x$. 

\subsection{Linear Systems and Controllability}

We start with the canonical linear discrete-time, time-invariant dynamics 
\[ x(k+1) ~=~A \, x(k) ~+~B \, u(k),\]
where $A \in \R^{n \times n}$, $B \in \R^{n \times m}$ and $k \in \Z_+$.
The state matrix $A$ describes the underlying structure of the system and the interaction strength between the agents, and matrix $B$ represents how the control input enters the system.
Equivalently, the dynamics can be written as 
\begin{equation}
 x(k+1) ~=~A\,x(k) ~+~\sum_{i \in [m]} b_i\, u_i(k),
 \label{model:a}
\end{equation}
where $b_i$'s are columns of matrix $B\in \R^{n \times m} $.
Then, the controllability matrix at time $t$ is given by
\begin{equation}
    \mathcal C(t)~=~\left [ B ~AB~A^2B~\cdots~ A^{t-1}B \right].
    \label{control-matrix}
\end{equation} 

In this paper, we assume that $t>0$ is the time horizon to control (also known as the time-to-control). It is well-known that  from a numerical standpoint it is better to characterize controllability in terms of the Gramian matrix at time $t$ defined as follows:
 \begin{equation}
\mathcal W(t) ~=~ \sum_{i=0}^{t-1} A^{i} BB^{\top} (A^{i})^\top ~=~ \mathcal C(t) \, \mathcal C^\top(t).
\label{gramian}
 \end{equation}
When looking at time-varying input/actuator schedules, we will consider the following linear system with time-varying input matrix $\mathcal B(.)$
\begin{equation}
 x(k+1) ~=~A \, x(k) ~+~ \mathcal B(k) \, u(k).
\label{g:model}
\end{equation}
For the above system, the controllability and Gramian matrices at time step $t$ are defined as
  \[ \mathcal C_*(t)~=~\big [ \mathcal B(t-1) ~A\mathcal B(t-2)~A^2\mathcal B(t-3)~\cdots ~A^{t-1}\mathcal B(0) \big], \]
  and
  \begin{eqnarray}
\mathcal W_*(t) &=& \sum_{i=0}^{t-1} A^{i} \mathcal B(t-i-1)\mathcal B^{\top}(t-i-1) (A^{i})^\top \nonumber \\
&=& \mathcal C_*(t) \, \mathcal C_*^\top(t),
\label{gramian-varying}
 \end{eqnarray}
 respectively.
 
\begin{assumption}
Throughout the paper, we assume that the system \eqref{model:a} is controllable (i.e., the controllability matrix has full row rank and the Gramian is positive definite). However, all results presented in this paper can be modified/extended to uncontrollable systems.
\end{assumption}

\subsection{Matrix Reconstruction and Sparsification}

The key idea through out the paper is to approximate the time-$t$ controllability Gramian as a sparse sum of rank-$1$ matrices, while controlling the approximation error. 
To this end, we present a key lemma from the sparsification literature and state the necessary modification. We then use this result later in our deterministic algorithm to find a sparse actuator schedule.

\begin{lemma}[Dual Set Spectral Sparsification \cite{Christos}]
\label{lemma-dualset}
Let $V=\{v_1, \ldots, v_t\}$ and $U=\{ u_1, \ldots, u_t\}$ be two equal cardinality decompositions of identity matrices (i.e., $\sum_{i=1}^t v_i v_i^\top = I_n$ and $\sum_{i=1}^t u_i u_i^\top = I_{\ell}$ ) where $v_i \in \R^n$ ($n < t$) and $u_i \in \R^\ell$ ($\ell \leq t$). Given an integer $\kappa$ with $n < \kappa \leq t$, Algorithm \ref{al::dualset} computes a set of weights $c_i \geq 0$ where $i \in [t]$, such that 
\[\lambda_{\min} \left(\sum_{i=1}^t c_i v_i v_i^\top  \right) \geq \left(1 - \sqrt{\frac{n}{\kappa}}\right)^2,\]
\[\lambda_{\max} \left(\sum_{i=1}^t c_i u_i u_i^\top  \right) \leq \left(1 + \sqrt{\frac{\ell}{\kappa}}\right)^2,\]
and
\[ \card \left \{ i ~:~ c_i >0, i \in [t] \right \} ~\leq~ \kappa.\]
\end{lemma}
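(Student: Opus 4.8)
The plan is to analyze the greedy rule of Algorithm~\ref{al::dualset} via the \emph{barrier} (potential-function) method of Batson, Spielman, and Srivastava, adapted to run on two vector sets simultaneously with a common weighting. The procedure performs exactly $\kappa$ rounds, and in each round it raises the weight of a single index; hence at most $\kappa$ of the $c_i$ are ever made positive, which immediately yields $\card\{i : c_i > 0\} \le \kappa$. All the difficulty is in the spectral guarantees. After $s$ rounds I would track the partial sums $X_s = \sum_i c_i v_i v_i^\top$ ($n\times n$) and $Y_s = \sum_i c_i u_i u_i^\top$ ($\ell\times \ell$), a lower barrier $\alpha_s$ maintained strictly below $\lambda_{\min}(X_s)$, and an upper barrier $\beta_s$ maintained strictly above $\lambda_{\max}(Y_s)$. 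The barriers advance by fixed increments $\alpha_{s+1}=\alpha_s+\delta_L$ and $\beta_{s+1}=\beta_s+\delta_U$, so after $\kappa$ rounds $\alpha_\kappa=\alpha_0+\kappa\delta_L$ and $\beta_\kappa=\beta_0+\kappa\delta_U$, and the two eigenvalue guarantees will follow from $\lambda_{\min}(X_\kappa)>\alpha_\kappa$ and $\lambda_{\max}(Y_\kappa)<\beta_\kappa$ after one common rescaling of the weights.

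The mechanism of each round rests on the potentials
\[
\Phi_L(\alpha, X)=\tr\!\big((X-\alpha I)^{-1}\big),\qquad
\Phi_U(\beta, Y)=\tr\!\big((\beta I-Y)^{-1}\big),
\]
which diverge as the relevant extreme eigenvalue approaches its barrier. Applying the Sherman--Morrison formula to the effect of adding $t\,vv^\top$ (resp.\ $t\,uu^\top$) together with a barrier shift, I would extract two scalar quantities $L_X(v)$ and $U_Y(u)$ with the following meaning: adding weight $t$ along $v$ keeps $\lambda_{\min}$ above $\alpha+\delta_L$ and does not increase $\Phi_L$ exactly when $t\ge 1/L_X(v)$; adding weight $t$ along $u$ keeps $\lambda_{\max}$ below $\beta+\delta_U$ and does not increase $\Phi_U$ exactly when $t\le 1/U_Y(u)$. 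A single weight $t$ therefore serves both matrices at index $i$ precisely when $U_Y(u_i)\le L_X(v_i)$, and a round succeeds as soon as one such index is found.

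The crux, and the step I expect to be the main obstacle, is to guarantee that a feasible index exists in \emph{every} round. I would establish this by an averaging argument that leans on the two partition-of-identity hypotheses $\sum_i v_iv_i^\top=I_n$ and $\sum_i u_iu_i^\top=I_\ell$. Collapsing the per-index quantities into traces and invoking the standard monotonicity of the potentials under a barrier shift (for instance $\Phi_U(\beta+\delta_U,Y)\le \Phi_U(\beta,Y)-\delta_U\,\tr\big(((\beta+\delta_U)I-Y)^{-2}\big)$ and its lower-barrier analogue), one arrives at bounds of the telescoping form
\[
\sum_{i=1}^{t} U_Y(u_i)\le \frac{1}{\delta_U}+\Phi_U(\beta,Y),\qquad
\sum_{i=1}^{t} L_X(v_i)\ge \frac{1}{\delta_L}-\Phi_L(\alpha,X),
\]
the second valid under the invariant $\Phi_L(\alpha,X)\le 1/\delta_L$, which also certifies $\alpha+\delta_L<\lambda_{\min}(X)$ so the shifted barrier remains admissible. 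Because the round construction never raises either potential, both stay below their initial values $\Phi_U(\beta_0,0)=\ell/\beta_0$ and $\Phi_L(\alpha_0,0)=n/|\alpha_0|$. It then suffices to fix $\delta_L,\delta_U,\alpha_0,\beta_0$ so that $\tfrac{1}{\delta_U}+\tfrac{\ell}{\beta_0}\le \tfrac{1}{\delta_L}-\tfrac{n}{|\alpha_0|}$ holds throughout; this forces $\sum_i U_Y(u_i)\le \sum_i L_X(v_i)$, whence some term obeys $U_Y(u_i)\le L_X(v_i)$. Verifying these trace inequalities, preserving the invariants across all $\kappa$ rounds, and keeping $L_X(v_i)>0$ is where the genuine work lies.

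Finally I would conclude by rescaling. With the parameter values of~\cite{Christos}, the post-round barriers satisfy $\alpha_\kappa/\beta_\kappa=(1-\sqrt{n/\kappa})^2/(1+\sqrt{\ell/\kappa})^2$, so multiplying every weight $c_i$ by the single constant that carries $\beta_\kappa$ to $(1+\sqrt{\ell/\kappa})^2$ carries $\alpha_\kappa$ to $(1-\sqrt{n/\kappa})^2$ at the same time. Since a common scaling of the $c_i$ scales $X_\kappa$ and $Y_\kappa$ identically and leaves the support unchanged, the two spectral bounds of the statement and the cardinality bound all hold simultaneously, completing the argument.
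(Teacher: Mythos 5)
The paper offers no proof of this lemma itself --- it imports it verbatim from \cite{Christos}, and the constructive skeleton of that proof is already embodied in Algorithm \ref{al::dualset} (the potentials $\underline\phi,\bar\phi$, the functions $\mathfrak L,\mathfrak U$, the barrier shifts $\underline\mu(\tau)=\tau-\sqrt{\kappa n}$ and $\bar\mu(\tau)=\bar\delta(\tau+\sqrt{\kappa\ell})$, and the final rescaling by $\kappa^{-1}\bigl(1-\sqrt{n/\kappa}\bigr)$). Your proposal is a faithful reconstruction of exactly that dual-set Batson--Spielman--Srivastava barrier argument --- the per-round feasibility via averaging over the two decompositions of identity, the invariant that neither potential ever increases, the parameter choice $\bar\delta=(1+\sqrt{\ell/\kappa})/(1-\sqrt{n/\kappa})$ forcing $\sum_i \mathfrak U \le \sum_i \mathfrak L$, and the closing observation that one common rescaling sends the two barriers to $\bigl(1-\sqrt{n/\kappa}\bigr)^2$ and $\bigl(1+\sqrt{\ell/\kappa}\bigr)^2$ simultaneously --- so it takes essentially the same route as the proof the paper relies on.
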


{Due to space limitations, we refer the interested readers to \cite{Christos} for more details on Algorithm \ref{al::dualset}.}
However, roughly speaking, Algorithm \ref{al::dualset} is based on choosing vectors in a greedy fashion that satisfy a set of desired properties at each step, leading to bounds on eigenvalues.
{In Algorithm \ref{al::dualset}, {lower and upper} barriers or potential functions are defined as follows:
 \begin{equation}
 \underline{\phi} (\underline \mu , \underline{\mathcal A}) = \sum_{i=1}^n \frac{1}{\lambda_i(\underline{\mathcal A})-\underline \mu },
 \label{eq::269}
 \end{equation}
 and 
\begin{equation}
 \bar \phi (\bar \mu,\bar{\mathcal A}) ~=~ \sum_{i=1}^{ \ell} \frac{1}{\bar \mu -\lambda_i(\bar{\mathcal A})},
 \label{eq::274}
\end{equation}
respectively.}
These potential functions quantify how far the eigenvalues of $\underline{\mathcal A}$ and $\bar{\mathcal A}$ are from the barriers $\underline \mu$ and $\bar \mu$. These potential functions become unbounded as any eigenvalue nears the barriers.\footnote{These potentials are equal to constant multiples of the Stieltjes transform of $\underline{\mathcal A}$ and $\bar{\mathcal A}$ evaluated at $\underline \mu$ and $\bar \mu$, respectively \cite{silverstein2009stieltjes}.}  {We control the maximum eigenvalue of $\bar{\mathcal A}$ using an upper barrier $\bar \mu$ and the minimum eigenvalue of $\underline{\mathcal A}$ using a lower barrier $\underline{\mu}$.} Two parameters $\mathfrak L$ and $\mathfrak U$ are defined as follows:
\begin{small}
\begin{eqnarray*}
&&\hspace{-.6cm}\mathfrak L(v, \underline \delta,\underline{\mathcal A},\underline \mu)~= \nonumber \\
&&~~ \frac{ v^\top \left (\underline{\mathcal A}-(\underline \mu +\underline \delta)I_n \right )^{-2}v}{\underline \phi (\underline \mu+\underline \delta, \underline{\mathcal A}) - \underline \phi (\underline \mu, \underline{\mathcal A})}- v^\top \left (\underline{\mathcal A}-(\underline \mu +\underline \delta)I_n \right)^{-1}v,
\end{eqnarray*}
\end{small}
and
\begin{small}
\begin{eqnarray*}
&&\hspace{-.6cm}\mathfrak U(u, \bar \delta,\bar{\mathcal A},\bar \mu ) = \nonumber \\
&&~~ \frac{ u^\top ((\bar \mu +\bar \delta)I_{\ell} - \bar{\mathcal A})^{-2}u}{\bar \phi (\bar \mu, \bar{\mathcal A}) - \bar \phi (\bar \mu+\bar \delta, \bar{\mathcal A})}+u^\top \left ((\bar \mu +\bar \delta)I_{\ell} - \bar{\mathcal A} \right )^{-1}u.
\end{eqnarray*}
\end{small}
{The Sherman-Morrison-Woodbury formula inspires the structure of the above quantities for more details on the barrier method (cf. \cite[Section 1.2]{camacho2014spectral}).}
{These potential functions \eqref{eq::269} and \eqref{eq::274} are chosen to guide the selection of vectors and scalings at each timestep $\tau$ and to ensure steady progress of the algorithm.
Small values of these potentials indicate that the eigenvalues of $\bar{\mathcal A}$ and $\underline{\mathcal A}$ do not concentrate near $\bar{\mu}$ and $\underline{\mu}$, respectively. 
In Algorithm \ref{al::dualset}, at each iteration, we increase the upper barrier $\bar{\mu}$ by a fixed constant $\bar{\delta}$ and the lower barrier $\bar{\mu}$ by another fixed constant $\underline{\delta}$. It can be shown that as long as the potentials remain bounded, there must exist (at every step $\tau$) a choice of an index $j$ and weight $c_j$  so that the addition of associated rank-1 matrices to $\bar{\mathcal A}$ and $\underline{\mathcal A}$, and the increments of barriers do not increase either potential and keep all the eigenvalues of the updated matrix between the barriers (see Algorithm \ref{al::dualset}). Repeating these steps ensures steady growth of all the eigenvalues and yields the desired result.}

This algorithm is a generalization of an algorithm from \cite{ramanujan} which is deterministic and at most needs $\mathcal O\left( \kappa t (n^2+\ell^2)\right)$. Furthermore, the algorithm needs $\mathcal O(\kappa tn^2)$ operations if $U$ contains the standard basis of $\R^t$; we refer the reader to \cite{Christos} for more details.

 \begin{algorithm}[t]
  {\small
    \SetKwInOut{Input}{Input}
    \SetKwInOut{Output}{Output}

    \Input{$V=\left [v_1,\ldots, v_t\right]$, with $VV^\top=I_n$\\
    $U=\left [u_1,\ldots, u_t\right]$, with $UU^\top=I_\ell$\\
    $\kappa \in \Z_+$, with $n<\kappa\leq t$\\ }
            \vspace{.1cm}
    \Output{$c=[c_1,c_2, \ldots, c_t] \in \R^{1 \times t}_+$ with $\|c\|_{0} \leq \kappa$ }
    		\vspace{.4cm}
Set $c(0)=0_{t \times 1}$, $\underline{\mathcal A}(0)=0_{n \times n}$, $\bar{\mathcal A}(0)=0_{\ell \times \ell}$, $\underline \delta=1$, $\bar \delta=\frac{1+\sqrt{\frac{\ell}{\kappa}}}{1-\sqrt{\frac{n}{\kappa}}}$\\
        \vspace{.1cm}
\For{$\tau=0:\kappa-1$}        
{
$\underline \mu({\tau})= \tau - \sqrt{\kappa n}$\\
$\bar \mu({\tau})=\bar \delta \left( \tau+\sqrt{\kappa\ell}\right)$\\
Find an index $j$ such that
\[ \mathfrak U(u_j, \bar \delta, \bar{\mathcal A}(\tau), \bar \mu({\tau})) \leq \mathfrak L(v_j, \underline \delta, \underline{\mathcal A}({\tau}), \underline \mu({\tau}))\]\\
Set $\Delta ={2}{\left(\mathfrak U(u_j, \bar \delta, \bar{\mathcal A}(\tau), \bar \mu({\tau})) + \mathfrak L(v_j, \underline \delta, \underline{\mathcal A}({\tau}), \underline \mu({\tau}))\right)^{-1}}$\\
Update the $j$-th component of $c(\tau)$:
\[c({\tau+1})=c(\tau)+ \Delta  \text{e}_j,\]\\
$\underline{\mathcal A}({\tau+1}
)=\underline{\mathcal A}({\tau})+\Delta  v_jv_j^\top$\\
$\bar{\mathcal A}({\tau+1})=\bar{\mathcal A}({\tau})+ \Delta u_ju_j^\top$
}
        \vspace{.1cm}
\Return $c=\kappa ^{-1}\left(1- \sqrt{\frac{n}{\kappa}}\right) c(\kappa)$\\

        \vspace{.1cm}
    \caption{\small A Deterministic Dual Set Spectral Sparsification $\DualSet(V, U, \kappa)$.}
    \label{al::dualset}}
\end{algorithm}

{\begin{remark}
We modify the $5^{\text{th}}$ line of Algorithm \ref{al::dualset}; at each step, we choose an index $j$ that maximizes 
\begin{equation} 
	\mathfrak L(v_j, \underline \delta, \underline{\mathcal A}({\tau}), \underline \mu({\tau}))- \mathfrak U(u_j, \bar \delta, \bar{\mathcal A}(\tau), \bar \mu({\tau})), 
	\label{eq::dualset1}
\end{equation}
instead of only finding an index $j$ such that 
\begin{equation}
	\mathfrak U(u_j, \bar \delta, \bar{\mathcal A}(\tau), \bar \mu({\tau})) \leq \mathfrak L(v_j, \underline \delta, \underline{\mathcal A}({\tau}), \underline \mu({\tau})).
	\label{eq::dualset2}
\end{equation}
We should note that if an index $j$ maximizes \eqref{eq::dualset1}, then it will satisfy \eqref{eq::dualset2}. Therefore, Lemma \ref{lemma-dualset} still holds for the modified algorithm, and hence the theoretical bounds are valid. {Based on our simulations, we observe that this modification can help to improve Algorithm \ref{al::dualset} by producing smaller ratio $\lambda_{\max}\left(\sum_{i=1}^t c_i v_i v_i^\top  \right)/\lambda_{\min}\left(\sum_{i=1}^t c_i v_i v_i^\top  \right)$ (in Section \ref{sec:weighted}, we will see that this quantity is closely related to approximation factor $\epsilon$).}
We denote the application of the algorithm to $V$ and $U$ by
\[ [c_1, c_2, \cdots, c_t] = \DualSet^* (V, U, \kappa).\]
\end{remark}
}

{ We now recall the concentration lemma of Rudelson-Vershynin \cite{Rudelson:2007} as follows. We are going to use this result in the proof of Theorem \ref{th:rand}.
\begin{lemma} \cite[Thm. 3.1]{Rudelson:2007} 
\label{rudelson}
Let $y \in \R^p$ be a random vector such that $\|y\| \leq b$ almost surely and $\|\E yy^\top \|_2 \leq 1$. Let $y_1,\cdots, y_n$ be i.i.d. copies of $y$. Then 
\begin{equation}
\E \left \|\frac{1}{n} \sum_{i=1}^n y_i y_i^\top-\E y y^\top \right \|_2 \leq \min \left (1, cb\sqrt{\frac{\log n}{n}}\right ),
\end{equation}
where $c > 0$ is some universal constant.
\end{lemma}
}
In the next section, we show how various controllability measures can be approximated by selecting a sparse set of actuators via the above algorithm.

\section{Systemic Controllability Metrics}

\begin{table*}[t]
{\small
  \begin{center}
  \begin{tabular}{|c|l|c|}
  \hline
    Optimality-criteria & Systemic Controllability Measure & Matrix Operator Form
    \\
    \hline
    \hline
    A-optimality & Average control energy & $\tr \left (\mathcal W^{-1}(t)\right)$   
    \\  
    \hline
    D-optimality & The volume of the ellipsoid & $\left (\det \mathcal W(t) \right)^{-1/n}$ 
    \\
    \hline
    T-optimality & Inverse of the trace & $1 /\tr (\mathcal W(t)) $
    \\
    \hline
    E-optimality & Inverse of the minimum eigenvalue & $1/\lambda_{\text{min}} (\mathcal W(t))$ 
    \\
    \hline
\end{tabular}
\end{center}}
    \caption{ \small Some important examples of systemic controllability metrics.} 
\label{Table:PerformanceMeasures}
\end{table*}

Similar to the {\it systemic} notions introduced in \cite{siami2017abstraction, Siami, siami2017growing}, we define various controllability metrics. These measures are  real-valued operators defined on  the set of all linear dynamical systems governed by  (\ref{g:model}) and quantify various measures of the required control energy.   All  of the metrics depend on the controllability Gramian matrix of the system which is a positive definite matrix. Therefore, one can define a systemic controllability performance measure as an operator on the set of Gramian matrices of all controllable systems with $n$ states which we represent by $\mathbb S_+^n$.\footnote{For any $X \in \mathbb S_+^n$ and given $t  \in \mathbb Z_{++}$, there exists at least one controllable system with $\mathcal W(t) = X$ (e.g., $x(k+1) = X^{\frac{1}{2}} u(k)$), and for any controllable system, it is well known that the Gramian matrix is positive definite (see \eqref{gramian-varying}). Therefore, the set of Gramian matrices of all controllable systems with $n$ states is equal to $\mathbb S_+^n$.    }

\begin{definition}[Systemic Criteria]
\label{metric}
A controllability metric $\rho: \mathbb S_+^{n} \rightarrow \R$ is systemic if and only if \\

	\noindent {1.} {\it Homogeneity:} For  all $\kappa >1$, 
	\[ \rhoo (\kappa A)~=~\kappa^{-1} \rhoo (A);\]
	\noindent {2.} {\it Monotonicity:} If $B \preceq A$, then
				\[\rhoo (A) ~\leq~ \rhoo (B).\]	
\end{definition}

For many popular choices of $\rho$, one can see that they satisfy the properties presented in Definition \ref{metric}. Some of them are listed in Table \ref{Table:PerformanceMeasures}. We note that similar criteria have been developed \cite{ravi2016experimental,kempthorne1952design,allen-zhu17e} in the experiment design literature (cf. Table I). In what follows, we will make this statement formal.

\begin{proposition}
For given dynamics {\eqref{g:model}} with Gramian matrix $\mathcal W(t)$, the metrics presented in Table 1 are systemic controllability measures.
\end{proposition}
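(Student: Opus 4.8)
The plan is to verify the three defining properties of Definition~\ref{metric} --- homogeneity, monotonicity, and convexity --- separately for each of the six entries in Table~\ref{Table:PerformanceMeasures}. Since all six metrics are built from the Gramian $\mathcal W(t)$ (and, for V- and G-optimality, the fixed design matrix $\mathcal C(t)$) through a handful of elementary matrix operations, I would organize the verification around a small set of reusable facts about positive definite matrices rather than treating the six cases from scratch.

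Homogeneity is immediate in every case and I would dispatch it first: scaling $\mathcal W \mapsto \kappa \mathcal W$ gives $(\kappa\mathcal W)^{-1}=\kappa^{-1}\mathcal W^{-1}$, $\det(\kappa\mathcal W)=\kappa^{n}\det\mathcal W$, $\tr(\kappa \mathcal W)=\kappa\,\tr\mathcal W$, and $\lambda_{\min}(\kappa\mathcal W)=\kappa\,\lambda_{\min}(\mathcal W)$, so in each row the defining scalar factors out as exactly $\kappa^{-1}$, as required. For monotonicity I would rely on the order-reversing property of matrix inversion: if $0\prec B\preceq A$ then $A^{-1}\preceq B^{-1}$, together with the fact that $\tr$, $\det$, and $\lambda_{\min}$ are all Loewner-monotone. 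From $A^{-1}\preceq B^{-1}$ one reads off A-optimality directly, and $\mathcal C^\top A^{-1}\mathcal C\preceq \mathcal C^\top B^{-1}\mathcal C$ yields V- and G-optimality (by comparing traces and individual diagonal entries); for D-, T-, and E-optimality one uses that $\det$, $\tr$, $\lambda_{\min}$ increase under $\preceq$ and that $x\mapsto x^{-1/n}$ and $x\mapsto x^{-1}$ are decreasing, which flips the inequality into the form $\rho(A)\le\rho(B)$.

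Convexity is the substantive part and the main obstacle. The key lemma I would isolate is that $X\mapsto v^\top X^{-1} v$ is convex on $\mathbb S_+^{n}$ for every fixed $v$, which I would prove via the variational identity $v^\top X^{-1} v=\sup_{w}\left(2\,v^\top w - w^\top X w\right)$, exhibiting it as a pointwise supremum of functions affine in $X$ and hence convex. Summing over $v=e_i$ yields convexity of A-optimality, $\tr(\mathcal W^{-1})=\sum_i e_i^\top \mathcal W^{-1} e_i$; summing over the columns of $\mathcal C(t)$ yields V-optimality; and taking the maximum over those columns yields G-optimality as a pointwise maximum of convex functions. T-optimality reduces to convexity of the scalar map $x\mapsto 1/x$ precomposed with the linear map $\tr$. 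For E-optimality I would use that $\lambda_{\min}$ is concave --- being an infimum of the linear maps $X\mapsto v^\top X v$ over unit vectors $v$ --- and compose it with the convex, decreasing map $x\mapsto 1/x$ to conclude that $1/\lambda_{\min}(X)$ is convex.

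The remaining and least routine case is D-optimality. I would show $(\det X)^{-1/n}$ is convex by restricting to an arbitrary line $X+tH$ with $X\succ0$ and $H$ symmetric, diagonalizing $X^{-1/2}HX^{-1/2}$ to write the restriction as $(\det X)^{-1/n}\prod_{i}(1+t\lambda_i)^{-1/n}$, and then observing that $\log\bigl(\prod_i(1+t\lambda_i)^{-1/n}\bigr)=-\tfrac{1}{n}\sum_i\log(1+t\lambda_i)$ is convex in $t$, so the product itself is convex as the exponential of a convex function. I expect the bookkeeping in this determinant case, together with the careful justification that inversion is order-reversing and that $v^\top X^{-1}v$ is convex, to be where essentially all the real work lies; once those facts are in hand, every entry of the table follows by assembling them row by row.
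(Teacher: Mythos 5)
Your proof is correct, but it takes a genuinely different route from the paper: the paper's entire proof is a one-line deferral to the optimal experimental design literature (Pukelsheim's monograph and a prior systemic-measures paper), stating that the three properties of Definition~\ref{metric} are easily seen to hold, whereas you supply a self-contained verification. Your organization is sound: homogeneity is dispatched by direct computation row by row; monotonicity rests on the order-reversing property of inversion, congruence-invariance of the Loewner order (for the V- and G-rows, $C^\top A^{-1}C \preceq C^\top B^{-1}C$), and Loewner-monotonicity of $\tr$, $\det$, $\lambda_{\min}$; and convexity is reduced to three reusable facts --- the variational identity $v^\top X^{-1}v=\sup_w\left(2v^\top w - w^\top X w\right)$ exhibiting $v^\top X^{-1}v$ as a supremum of affine functions, concavity of $\lambda_{\min}$ composed with the decreasing convex map $x\mapsto 1/x$, and the line-restriction argument showing $\left(\det X\right)^{-1/n}$ restricted to $X+tH$ equals a positive constant times $\exp\bigl(-\tfrac{1}{n}\sum_i\log(1+t\lambda_i)\bigr)$, a convex function of $t$. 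These are exactly the arguments the paper's citations point to, so in substance you have reconstructed the content the paper outsources. What the paper's approach buys is brevity, justified because these facts are textbook material on information matrices; what yours buys is a verifiable, reference-free proof, including the only two cases that require real work (E-optimality, where one must note that composing a concave function with a decreasing convex one preserves convexity, and D-optimality, where the determinant bookkeeping is needed). One small point worth making explicit in your write-up: for the V- and G-rows the matrix $\mathcal C(t)$ is held fixed as design data while only $\mathcal W$ varies, which is the convention under which your homogeneity and convexity computations for those rows are valid, and is indeed the convention the paper adopts in Table~\ref{Table:PerformanceMeasures}.
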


\begin{proof}
One can easily see that all these measures satisfy the homogeneity, and monotonicity properties in Definition \ref{metric} (cf. \cite{pukelsheim1993optimal, siami2017growing}).
\end{proof}

In the next section, we show how various measures can be approximated by selecting a sparse set of actuators.

\section{Sparse Actuator Selection Problems}
\label{sec:sparse}
For a given linear system \eqref{model:a} with a general underlying structure, the actuator scheduling problem seeks to construct a schedule of the control inputs that keeps the number of active actuators much less than the original system such that the controllability matrices of the original and the new systems are similar in an appropriately defined sense.
Specifically, given a canonical linear, time-invariant system \eqref{model:a} with $m$ actuators and controllability Gramian matrix $\mathcal W(t)$ at time $t$, our goal is to find a sparse actuator schedule such that the resulting system with controllability Gramian $\mathcal W_s(t)$ is well-approximated, i.e., 
\begin{equation}
\left| \frac{\rho\left (\mathcal W(t)\right)-\rho\left (\mathcal W_s(t)\right )}{\rho \left (\mathcal W(t) \right )} \right| ~\le~ \epsilon,
\end{equation}
where $\rho$ is any systemic controllability metric that quantifies the difficulty of the control problem for example as a function of the required control energy, and $\epsilon \geq 0 $ is the approximation factor. The systemic controllability metrics are defined based on the controllability Gramian, therefore ``close" Gramian matrices result in approximately the same values.
Our goal here is to answer the following questions:

\begin{itemize}
    \item[-] What is the minimum number of actuators to be chosen to achieve a good approximation of the system with the full set of actuators utilized?
    \item[-] What is the relation between the number of selected actuators and performance/controllability loss?
    \item[-] Does a sparse approximation schedule exist with at most a constant number of  active actuators at each time?
    \item[-] What is the time complexity of choosing the subset of actuators with guaranteed performance bounds?
\end{itemize}

In the rest of this paper, we show how some fairly recent advances in theoretical computer science and the probabilistic method can be utilized to answer these questions. The probabilistic method is one of the most important tools of modern combinatorics which was introduced by Erd\"os. The idea is that a deterministic solution is shown to exist by constructing a random candidate satisfying all the requirements of the problem with positive probability. Recently, Marcus, Spielman, and Srivastava introduced a new variant of the probabilistic method which ends up solving the so-called Kadison-Singer (KS) conjecture \cite{marcus}.  We use the solution approach to the KS conjecture together with  a combination of  tools from  Sections \ref{sec:weighted} to find a sparse approximation of the actuator selection problem with algorithms that have favorable time-complexity.

\section{A Weighted Sparse Actuator Schedule}
\label{sec:weighted}

As a starting point, we allow for scaling of the input signals at chosen inputs while keeping the input scaling bounded. The input scaling allows for an extra degree of freedom that could allow for choosing a sparser set of inputs. 
Given \eqref{model:a}, we define a weighted actuator schedule by $\sigma = \{\sigma_k\}_{k=0}^{t-1}$ and scalings $s_i(k) \geq 0$ where  $i \in [m]$, $k+1 \in [t]$, and $\sigma_k=\{i | s_i(k) >0\} \subseteq [m]$. The resulting system with this schedule is 
\begin{equation}
 x(k+1) ~=~A \, x(k) ~+~\sum_{i \in \sigma_k} s_{i}(k)\, b_i \, u_i(k),~k \in \Z_+
 \label{model:b}
 \end{equation}
where $s_{i}(k) \geq 0$ shows the strength of the $i$-th control input at time $k$. 
The controllability Gramian \eqref{gramian-varying} at time $t$ for this system can be rewritten as
\begin{equation}
\mathcal W_{s}(t) ~=~ \sum_{k=0}^{t-1} \sum_{j \in \sigma_k} s_j^2(k)\left (A^{t-k-1} b_j\right)\left (A^{t-k-1}b_j\right)^\top.
\label{W_s}
\end{equation}

Our goal is to reduce the number of active actuators {\em on average} $d$, where  
\begin{small}
\begin{equation}
 d~:=~\frac{ \sum_{k=0}^{t-1}\card\left \{\sigma_k\right \}}{t},
 \label{def-d}
\end{equation}\end{small}such that the controllability Gramian of the fully actuated and the new sparsely actuated system  are ``close." Of course, this approximation will require horizon lengths that are potentially longer than the dimension of the state. The definition below formalizes this approximation.

\begin{definition}
\label{def:aprox}
Given a time horizon $t \geq n$, system \eqref{model:b} with a weighted actuator schedule is $(\epsilon, d)$-approximation of system \eqref{model:a} if and only if
\begin{equation}
(1-\epsilon)\,\mathcal W(t) ~\preceq ~ \mathcal W_s(t) ~\preceq ~(1+\epsilon)\,\mathcal W(t),
\label{eq:321}
\end{equation}
where $\mathcal W(t)$ and $\mathcal W_s(t)$ are the controllability Gramian matrices of \eqref{model:a} and \eqref{model:b}, respectively, and  parameter $d$ is defined by \eqref{def-d} as the average number of active actuators, and $\epsilon \in (0,1)$ is the approximation factor.
\end{definition}

\begin{remark}
 While it might appear that allowing for the choice of $s_i(k)$ might lead to amplification of input signals,  we  note that the scaling  cannot be too large because  the approximation is two-sided. Specifically, by taking  the trace from both sides of \eqref{eq:321}, we can see that the weighted summation of  $s_i^2(k)$'s is bounded. Moreover, based on Definition \ref{def:aprox}, the ranks of matrices $\mathcal W(t)$ and $\mathcal W_s(t)$ are the same. Thus, the resulting $(\epsilon,d)$-approximation remains controllable (recall that we assume that the original system is controllable).
\end{remark}

{
\begin{remark}
 The results presented in this paper also work for the case of linear time-varying systems, and it is straightforward to extend them for affine nonlinear discrete-time systems as well.
\end{remark}
}

\subsection{Deterministic Approach: Sparsifying Sums of Rank-one Matrices}
 The next theorem constructs a solution for the sparse weighted actuator schedule problem in polynomial time.

 \begin{algorithm}[t]
 {\small
    \SetKwInOut{Input}{Input}
    \SetKwInOut{Output}{Output}

    \Input{$A \in \R^{n \times n}$, $B \in \R^{n \times m}$, $t$ and $d$}
            \vspace{.1cm}
    \Output{$s_i(k) \geq 0$ for $(i,k+1) \in [m] \times [t]$}
    		\vspace{.4cm}
$\mathcal C(t) := \left [ B~AB~A^2B~\cdots~A^{t-1}B \right ]$\\
        \vspace{.1cm}
Set $V= \left(\mathcal C(t) \mathcal C^\top(t)\right)^{-\frac{1}{2}}\mathcal C(t)$\\
        \vspace{.1cm}
Set $U=V$\\
        \vspace{.1cm}
Run $[c_1, \cdots, c_{mt}] = \DualSet^* (V, U,dt)$\\
        \vspace{.1cm}
\Return $s_i(k):=\sqrt{c_{i+mk}{/(1+\frac{n}{dt})}}$ for $(i,k+1) \in [m] \times [t]$\\
        \vspace{.1cm}
\caption{\small A deterministic greedy-based algorithm to construct a sparse weighted actuator schedule (Theorem \ref{th-main-approx}).}
    \label{al-approx}}
\end{algorithm}

 \begin{algorithm}[t]
 {\small
    \SetKwInOut{Input}{Input}
    \SetKwInOut{Output}{Output}

    \Input{$A \in \R^{n \times n}$, $B \in \R^{n \times m}$, $t$ and $d$}
            \vspace{.1cm}
    \Output{$s_i(k) \geq 0$ for $(i,k+1) \in [m] \times [t]$}
    		\vspace{.4cm}
$\mathcal C(t) := \left [ B~AB~A^2B~\cdots~A^{t-1}B \right ]$\\
        \vspace{.1cm}
Set $V= \left(\mathcal C(t) \mathcal C^\top(t)\right)^{-\frac{1}{2}}\mathcal C(t)$\\
        \vspace{.1cm}
Set \[U~=~ \left[\underbrace{\text e_1, \ldots, \text e_{mt}}_{=I_{mt}}\right]\]
\tcp{where $\text e_i \in \R^{mt}$ for $i \in [mt]$ are the standard basis vectors for $\R^{mt}$}
        \vspace{.1cm}
Run $[c_1, \cdots, c_{mt}] = \DualSet^* (V, U,dt)$\\
        \vspace{.1cm}
\Return $s_i(k):= \sqrt{c_{i+mk}}$ for $(i,k+1) \in [m] \times [t]$\\
        \vspace{.1cm}
\caption{\small A deterministic greedy-based algorithm to construct a sparse weighted actuator schedule (Corollary \ref{th-main-2}).}
    \label{al-max-1}}
\end{algorithm}

\vspace{.01cm}
\begin{theorem}
\label{th-main-approx}
Given the  time horizon $t \geq n$, model \eqref{model:a}, and $d > 1$,  
Algorithm \ref{al-approx} deterministically constructs an actuator schedule such that the resulting system \eqref{model:b} is a  $\left (\epsilon,d \right )$-approximation of \eqref{model:a} with $\epsilon=\frac{2}{\sqrt {\frac{dt}{n}} +\sqrt\frac{n}{dt}}$ in at most $\mathcal O\left ( dm(tn)^2 \right)$ operations.
\end{theorem}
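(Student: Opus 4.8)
The plan is to recognize Algorithm \ref{al-approx} as a direct instantiation of the Dual Set Spectral Sparsification routine of Lemma \ref{lemma-dualset}, applied to a whitened version of the controllability matrix, and then to verify that the returned rescaled weights furnish exactly the claimed $(\epsilon,d)$-approximation. First I would confirm that the matrix $V = (\mathcal C(t)\mathcal C^\top(t))^{-1/2}\mathcal C(t) = \mathcal W^{-1/2}(t)\,\mathcal C(t)$ formed in the second line has columns $\bar v_{ij} = \mathcal W^{-1/2}(t) A^i b_j$ and satisfies $VV^\top = \mathcal W^{-1/2}\mathcal C \mathcal C^\top \mathcal W^{-1/2} = \mathcal W^{-1/2}\mathcal W \mathcal W^{-1/2} = I_n$, which is precisely \eqref{sum::identity}. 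Hence $V$, together with the choice $U=V$ (so that the parameter $\ell$ of Lemma \ref{lemma-dualset} equals $n$), is a valid equal-cardinality decomposition of the identity by $mt$ vectors, and I may invoke Lemma \ref{lemma-dualset} with $\kappa = dt$; the hypotheses $n < dt \le mt$ hold because $d>1$, $t\ge n$, and $d\le m$.

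With $U=V$, both eigenvalue bounds of Lemma \ref{lemma-dualset} apply to the \emph{same} matrix, yielding $\bigl(1-\sqrt{n/(dt)}\bigr)^2 I \preceq \sum_{i,j} c_{ij}\,\bar v_{ij}\bar v_{ij}^\top \preceq \bigl(1+\sqrt{n/(dt)}\bigr)^2 I$ together with at most $dt$ nonzero weights. Conjugating by $\mathcal W^{1/2}(t)$ converts this into a two-sided bound on $\sum_{i,j} c_{ij}\,v_{ij}v_{ij}^\top$ relative to $\mathcal W(t)$, where $v_{ij}=A^i b_j$ as in \eqref{eq:465}. The decisive step is the algebraic normalization: writing $x=\sqrt{dt/n}$, the midpoint of the two factors $(1\mp 1/x)^2$ equals $1+1/x^2 = 1+n/(dt)$, which is exactly the quantity dividing $c_{i+mk}$ in the rule $s_i(k)=\sqrt{c_{i+mk}/(1+n/(dt))}$. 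Therefore $\mathcal W_s(t) = (1+n/(dt))^{-1}\sum_{i,j} c_{ij}v_{ij}v_{ij}^\top$ by \eqref{W_s}, and dividing the bound by $1+1/x^2$ produces lower and upper factors $\frac{(x-1)^2}{x^2+1}$ and $\frac{(x+1)^2}{x^2+1}$. Invoking the identity $\frac{(x\pm1)^2}{x^2+1} = 1\pm\frac{2}{x+1/x}$ identifies these with $1-\epsilon$ and $1+\epsilon$ for $\epsilon = \frac{2}{\sqrt{dt/n}+\sqrt{n/(dt)}}$, which is \eqref{eq:321}.

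It then remains to read off the average sparsity and the running time. The bound $\card\{(i,j):c_{ij}>0\}\le dt$ states that the total number of active actuator--time pairs is at most $dt$, so by \eqref{def-d} the average number of active actuators per step is at most $d$. For the complexity I would invoke the operation count recorded after Lemma \ref{lemma-dualset}: with $\kappa=dt$, two decompositions of $mt$ vectors, and dimensions $n=\ell$, the sparsification costs $\mathcal O(\kappa\cdot mt\cdot(n^2+\ell^2)) = \mathcal O(dm(tn)^2)$, while the preprocessing (forming $\mathcal C(t)$, the Gramian $\mathcal W(t)$, its inverse square root, and $V$) costs $\mathcal O(n^2 mt + n^3)$ and is dominated by the former.

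I expect the only genuinely delicate points to be the index bookkeeping that matches the returned weight $c_{i+mk}$ to the term $s_j^2(t-i-1)$ in \eqref{W_s}, and the consistency check that the symmetric two-sided bound of Lemma \ref{lemma-dualset}, normalized by $1+n/(dt)$, yields precisely the same $\epsilon$ as the one-sided normalization by $1-\epsilon$ used in the existence proof of Theorem \ref{th-main}; everything else follows the same whitening--conjugation--normalization pattern already rehearsed there.
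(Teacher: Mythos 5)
Your proposal is correct and follows essentially the same route as the paper's own proof: identify $V=U=\mathcal W^{-1/2}(t)\,\mathcal C(t)$ as a decomposition of $I_n$, invoke Lemma \ref{lemma-dualset} with $\kappa=dt$ (so $\ell=n$ and both spectral bounds hit the same matrix), conjugate by $\mathcal W^{1/2}(t)$, normalize the weights by $1+\tfrac{n}{dt}$ to turn the factors $\bigl(1\pm\sqrt{n/(dt)}\bigr)^2$ into $1\pm\epsilon$, and read off sparsity and the $\mathcal O\left(dm(tn)^2\right)$ cost from the cardinality bound and the per-iteration evaluation count. Your explicit midpoint/identity algebra and the observation that preprocessing is dominated are slightly more detailed than the paper's write-up, but the argument is the same.
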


\begin{proof}
The controllability Gramian  of \eqref{model:a} at time $t$ is given by
\begin{eqnarray}
\mathcal W(t)&=&\sum_{i=0}^{t-1} \sum_{j=1}^m (\underbrace{A^{i} b_j}_{v_{ij}})(A^{i} b_j)^\top \nonumber  \\ 
&=& \sum_{i=0}^{t-1} \sum_{j=1}^m v_{ij} v_{ij}^\top.
\label{eq:465}
\end{eqnarray}
By multiplying $\mathcal W^{-\frac{1}{2}}(t)$ on both sides of \eqref{eq:465}, it follows that 
\begin{eqnarray}
I&=&\sum_{i=0}^{t-1} \sum_{j=1}^m \underbrace{(\mathcal W^{-\frac{1}{2}}(t)A^{i} b_j)}_{\bar v_{ij}} \, (\mathcal W^{-\frac{1}{2}}(t) A^{i} b_j )^\top \nonumber \\
&=& \sum_{i=0}^{t-1} \sum_{j=1}^m \bar  v_{ij} \bar v_{ij}^\top. 
\label{sum::identity}
\end{eqnarray}
Next, we define $U := \{ \bar v_{ij} | i+1 \in [t], j \in [m]\}$ and $V := U$. According to \eqref{gramian}, \eqref{eq:465}, and \eqref{sum::identity}, elements of $U$ are the columns of matrix $\left(\mathcal C(t) \mathcal C^\top(t)\right)^{-\frac{1}{2}}\mathcal C(t)$.
We now apply Lemma \ref{lemma-dualset}, which shows that there exist scalars $\bar c_{ij} \geq 0$ with 
\begin{equation}
    \card \left \{(i,j)\,:\,i+1 \in [t], j \in [m] , \bar c_{ij}  > 0\right \} ~\leq~ \frac{dt}{n}\times n,
    \label{eq:433}
\end{equation}
such that
\begin{equation*}
 \left ( 1- \sqrt \frac{n}{dt} \right)^2 I ~\preceq ~  \sum_{i=0}^{t-1}\sum_{j=1}^m \bar c_{ij} \, \bar  v_{ij} \, \bar v_{ij}^\top,
\end{equation*}
and
\begin{equation*}
\sum_{i=0}^{t-1}\sum_{j=1}^m \bar c_{ij} \, \bar  v_{ij} \, \bar v_{ij}^\top~\preceq ~ \left ( 1+ \sqrt \frac{n}{dt} \right)^2 I,
\end{equation*}
 or equivalently, 
\begin{equation}
\small{
\left ( 1- \sqrt \frac{n}{dt} \right)^2 \mathcal W(t) \preceq  \sum_{i=0}^{t-1}\sum_{j=1}^m \bar c_{ij} \, v_{ij} \, v_{ij}^\top \preceq  \left ( 1+ \sqrt \frac{n}{dt} \right)^2 \mathcal W(t).}
\label{eq:441}
\end{equation}
We can of course write  the controllability Gramian  of \eqref{model:b} at time $t$ as
\begin{eqnarray*}
\mathcal W_s(t)&=&\sum_{i=0}^{t-1} \sum_{j=1}^m s_j^2(t-i-1) (\underbrace{A^{i} b_j}_{v_{ij}})(A^{i} b_j)^\top \\
&=& \sum_{i=0}^{t-1} \sum_{j=1}^m s_j^2(t-i-1) \, v_{ij} v_{ij}^\top. 
\end{eqnarray*}
Define $\epsilon~:=~\frac{2}{\sqrt {\frac{dt}{n}} +\sqrt\frac{n}{dt}}$,
and 
\begin{equation}
       s_j(t-i-1):=\sqrt{\bar c_{ij} /(1+\frac{n}{dt})}.
\end{equation}
Then, by substituting $ \left ( 1+ \frac{n}{dt}\right)s_j^2(t-i-1)$ for $\bar c_{ij}$ in \eqref{eq:441}, we get
\begin{equation}
(1 - \epsilon) \mathcal W(t) ~\preceq ~ \mathcal W_s(t) ~\preceq ~  (1+\epsilon) \mathcal W(t).
\label{eq:452a}
\end{equation}
Finally, using \eqref{eq:452a}, \eqref{eq:433}, and Definition \ref{def:aprox}, we obtain the desired result. Moreover, this algorithm  runs in $dt$ iterations; In each iteration,  the functions $\mathfrak U$ and $\mathfrak L$ are evaluated at most $mt$ times. All $mt$ evaluations for both functions need at most $\mathcal O(n^3+mtn^2)$ time, because for all of them the
matrix inversions and {eigenvalue decompositions} can be calculated once. Finally, the updating step needs an additional $\mathcal O(n^2)$ time. Overall, the complexity of the algorithm is
of the order  $\mathcal O\left ( dm(tn)^2 \right)$. 
\end{proof}



\begin{remark}
For a given $d \geq 1$, while choosing $dt$ columns of the controllability matrix that form a full row rank matrix (i.e., the system is controllable) is an easy task but finding $dt$ columns of the controllability matrix that approximate the full Gramian matrix is what we are interested in here.
To do so, we should note that approximating the full Gramian matrix while keeping the number of active actuators less than a constant $d$ at each time is not possible in general. For example, in the case that $A=\mathbf 0_{n \times n}$ and $B=I_n$, at least all actuators at time $k=t-1$ are needed to form a full row rank matrix (or to approximate the full Gramian matrix). However, as we mentioned earlier, the number of active actuators on average can be kept constant in order to approximate the full Gramian matrix. Furthermore, condition $dt \geq n$ is needed for any algorithm that has a hope of success. Indeed, taking $B = I_n$ and $A = I_n$, it is straightforward to see that if $dt < n$, then we cannot hope to approximate the controllability Gramian because the controllability matrix of any schedule with $d$ active actuators on average is not full rank.
\end{remark}

{\subsubsection*{Feasibility Results}

The next remark uses results from the graph sparsification literature to show that the obtained bound on $\epsilon$ in Theorem \ref{th-main-approx} is within a constant factor of optimal.

	\begin{figure}[t]
	\centering      
	\includegraphics[trim = 0.5 0.5 0.5 0.5, clip,width=.4 \textwidth]{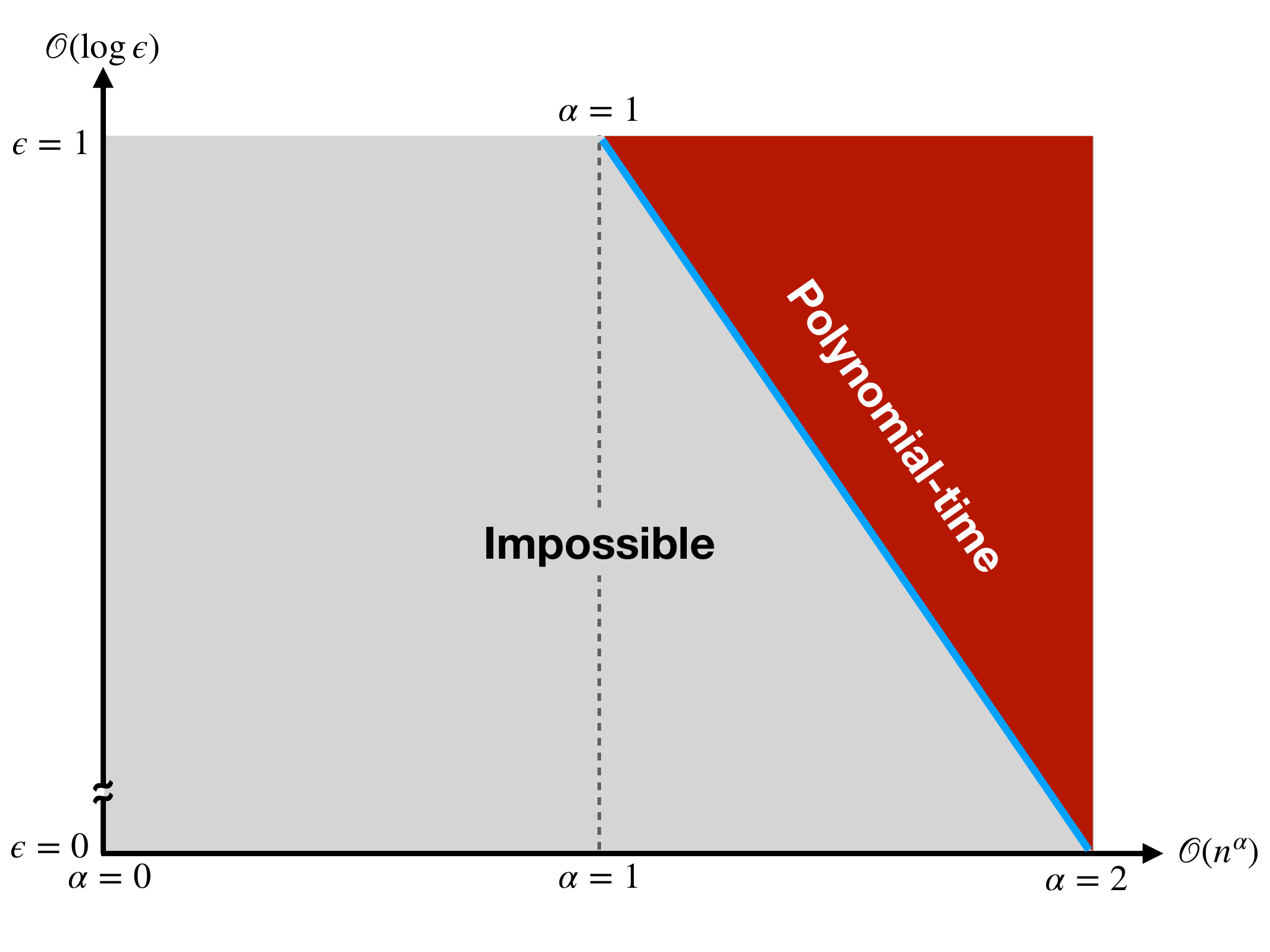} 
	\caption{\small  An illustration of how $\log \epsilon$ and the total number of active actuators $td=\mathcal O(n^\alpha)$ are related to each other (for simplicity we assume $t=m=n$). In the polynomial-time regime (i.e., red area), there is a polynomial-time algorithm for finding $(\epsilon, d)$-approximation. In the gray area, the sparse actuator scheduling is not feasible (see Remark \ref{remark:5}).  The blue line shows the presented bounds in Theorem \ref{th-main-approx}. In this plot, it is assumed that $n \rightarrow \infty$.}
	\label{fig:complex}
\end{figure}

\begin{remark}
\label{remark:5}
{In this remark, {we assume $t=n$, which means the time-to-control is $n$ steps.} We {present} a class of linear systems such that there is no sparse actuator schedule that achieves $(d,\epsilon)$-approximation where $\epsilon < \frac{1}{\sqrt{d/2}+1}$ for arbitrarily large $n$. {We will see that if we keep $d$ fixed while we let $n$ grow, $\epsilon$ cannot exceed this bound in the limit by translating our problem to an equivalent solved problem in graph theory.}
There is a close connection between sparse actuator scheduling and graph sparsification. The contribution of each link in the Laplacian matrix is a rank-one matrix same as the contribution of each actuator at each time in the Gramian matrix. Therefore, to sparsify a graph (\ie, reducing the number of links), we need to sparsify the summation of rank-one matrices. In this paper, we use a similar strategy to sparsify actuators in time and space. 

Consider model \eqref{model:a} with the following state matrices
\[A = \big [\textrm{e}_2, \textrm{e}_3, \cdots,  \textrm{e}_n, \textrm{e}_1 \big ]
= \begin{bmatrix}
	0&0&0&\cdots&0&1\\
 	1&0&0&\cdots&0&0\\
 	0&1&0&\cdots&0&0\\
  	\vdots&\vdots&\vdots&\ddots&\vdots&\vdots\\
	0&0&0&\cdots&0&0\\
   	0&0&0&\cdots&1&0
\end{bmatrix},
\]
and 
\[B =  \begin{bmatrix} 
	\sqrt{2n}&-1&-1&\ldots&-1\\
 	0&1&0&\cdots&0\\
 	0&0&1&\cdots&0\\
  	\vdots&\vdots&\vdots&\ddots&\vdots\\
   	0&0&0&\cdots&1
\end{bmatrix},\]
where $\textrm{e}_i$'s are the standard basis for $\R^n$.
For this system, the controllability has a unique structure: the columns of the controllability matrix \eqref{control-matrix} are the union of  $\left \{\sqrt{2n} \textrm{e}_i |~i \in [n]\right \}$ and all columns of the incidence matrix of a complete graph with $n$ nodes.\footnote{The oriented incidence matrix of a directed graph has one row for each node and one column for each link of the graph. If a link runs from node $i$ to node $j$, the column corresponding to that link has $-1$ in row $i$ and $1$ in row $j$; all other entries in that column are $0$.}  We now explain why: first, note that the columns of matrix $B$ are $\sqrt{2n} \textrm{e}_1$ and columns of the incidence matrix of a star graph with $n$ nodes where the center node is node $1$; moreover, $A$ is an elementary matrix that is obtained by shifting rows of the identity matrix. Therefore, for the other blocks  of the controllability matrix (i.e., $A^iB$ where $i=1, \ldots, n-1$), the columns are $\sqrt{2n} \textrm{e}_{i+1}$ and columns of the incidence matrix of a star graph with $n$ nodes where the center node is node $i+1$ (since $A^i$ is a permutation matrix). Note that links of a complete graph with $n$ nodes consists of directed links of $n$ star graphs with $n$ nodes. 
From the graph sparsification literature, the method of \cite{ramanujan} proves a more general result not only for graph sparsification but also about sparsifying sums of rank-one matrices. {By repeating the steps of \cite[Proposition 4.2]{ramanujan}, it now follows that} for this class of linear systems there is no weighted actuator schedule that can achieve $\epsilon$ less than 
\[\frac{{1}}{\sqrt{\frac{d}{2}}+1},\]
for arbitrarily large $n$ and given bounded $d$ greater than two (cf. \cite{srivastava2018alon}).  This means that $\epsilon$ cannot be reduced below $\mathcal O\left ( \frac{2}{\sqrt d + \sqrt{d^{-1}}} \right )$, where $\frac{2}{\sqrt d + \sqrt{d^{-1}}}$ is the proposed lower bound in Theorem \ref{th-main-approx} where $t=n$. Therefore, we conclude that our obtained bound on $\epsilon$ is within a constant  factor of optimal.}
\end{remark}

{ Fig. \ref{fig:complex} presents a computational phase diagram for the sparse actuator scheduling problem. This plot shows how $\log \epsilon$ and the total number of active actuators $td=\mathcal O(n^\alpha)$ are related to each other (for simplicity we assume $t=m=n$). In the polynomial-time regime (i.e., red area), there is a polynomial-time algorithm for finding $(\epsilon, d)$-approximation. In the gray area, the sparse actuator scheduling is not feasible (see Remark \ref{remark:5}).  The blue line shows the presented bounds in Theorem \ref{th-main-approx}. In this plot, it is assumed that $n \rightarrow \infty$.
}

\subsubsection*{Tradeoffs}
Theorem \ref{th-main-approx} illustrates a tradeoff between the average number of active actuators $d$ and the time horizon $t$ (also known as the time-to-control). 
This implies that the reduction in the  average number of active actuators comes at the expense of  increasing time horizon $t$ in order to get the same approximation factor $\epsilon$. 
Moreover, the approximation becomes more accurate as  $t$ and $d$ are increased. Of course, increasing $d$ will require more active actuators and  larger $t$ requires a larger control time window.

Fig. \ref{fig:1} depicts the approximation ratio $\epsilon$ given by Theorem \ref{th-main-approx} versus the average number of active actuators $d$ and the normalized time horizon ${t}/{n}$. We note that the approximation factor improves as $t$ become larger than $n$. Moreover, because of $\frac{2}{x+\frac{1}{x}} \leq 1$ for $x>0$, the approximation factor $\epsilon=\frac{2}{\sqrt {\frac{dt}{n}} +\sqrt\frac{n}{dt}}$ is always less than or equal to one. Hence, the upper bound ratio in \eqref{eq:321} is at most two.

	\begin{figure}[t]
	\centering
	 \psfrag{x}[h][h]{ \footnotesize{$d$}}  
	 \psfrag{y}[t][t]{ \footnotesize{$t/n$}}  
	 \psfrag{z}[t][t]{ \footnotesize{$\epsilon$}}        
	\includegraphics[trim = 0.5 0.5 0.5 0.5, clip,width=.4 \textwidth]{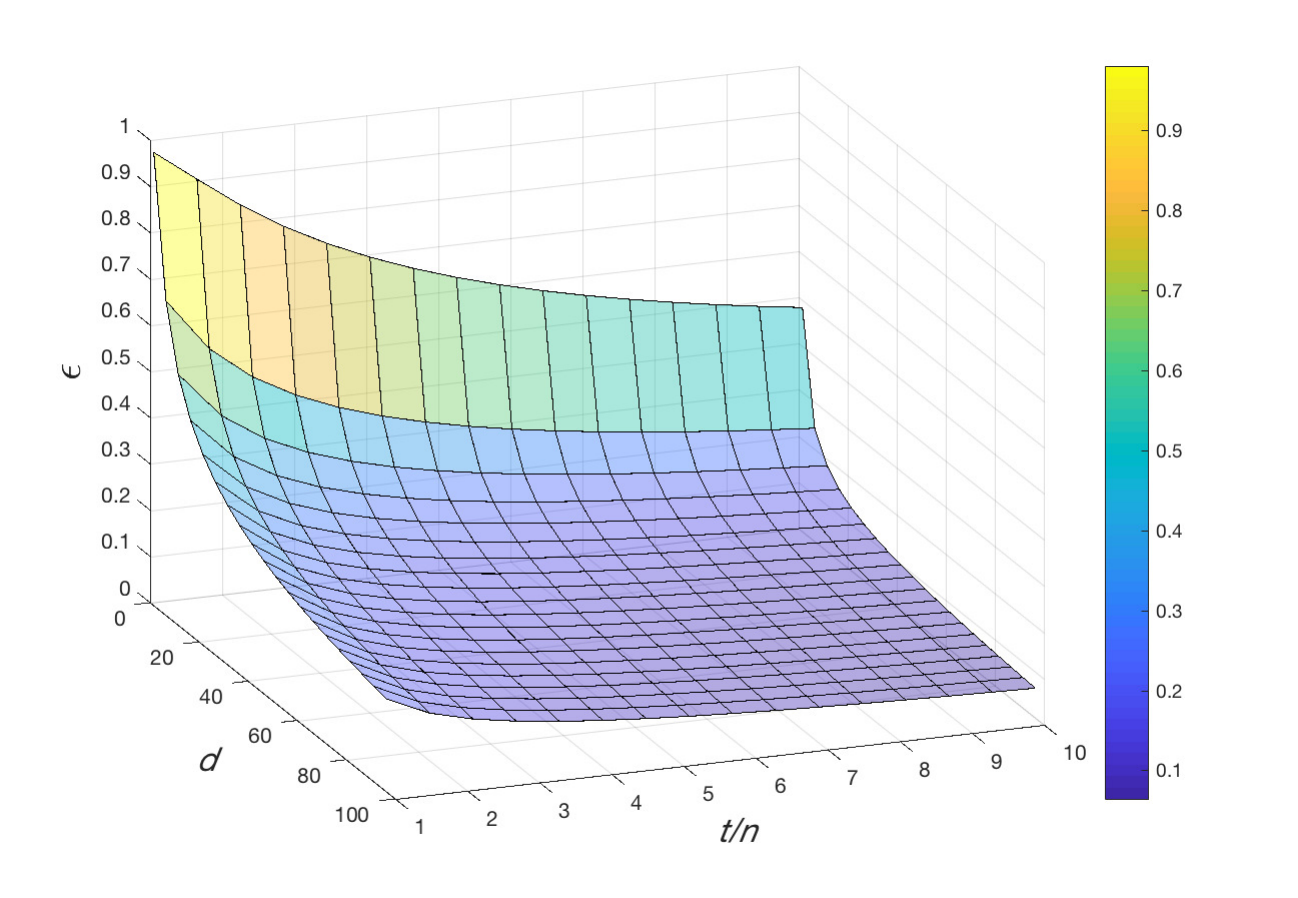} 
	\caption{\small This plot presents the approximation factor $\epsilon$ given by Theorem \ref{th-main-approx} versus the average number of active actuators $d \in (1, 100]$ and the normalized time horizon ${t}/{n}$. 	 }
	\label{fig:1}
\end{figure}

\subsubsection*{Sparse Actuator Schedules with Energy Constraints}

In this subsection, based on the energy/budget constraints on the scalings $s_i(k)$'s where $i \in [m]$ and $k+1 \in [t]$; three cases are considered as follows
\begin{itemize}
    \item[(i)] the scaling ratios are bounded, i.e., \[\max_{i \in [m],k+1 \in [t]} s^2_i(k)~\leq~\gamma,  \] 
    \item[(ii)] the sum of scaling ratios for each input is bounded, i.e., \[\max_{i \in [m]} \sum_{k+1 \in [t]} s_i^2(k)~\leq~\gamma,\]
    \item[(iii)] the sum of scaling ratios at each time is bounded, i.e.,  \[\max_{k+1 \in [t]} \sum_{i \in [m]} s_i^2(k)~ \leq~ \gamma. \]
\end{itemize}
%
In the next corollaries, we present deterministic sparse actuator schedules with the above energy/budget constraints. {These corollaries trade one of the inequalities in Theorem \ref{th-main-approx} with a single fixed bound on the size of scalings.}

\begin{corollary}
\label{th-main-2}
Given the  time horizon $t \geq n$, model \eqref{model:a}, and $d > 1$,  
Algorithm \ref{al-max-1} deterministically constructs an actuator schedule for \eqref{model:b} in at most $\mathcal O\left ( dm(tn)^2 \right)$ operations such that, on average, at most $d$ active actuators are selected, and the following bound
\[{\rho(\mathcal W_s(t))} \leq \left ( 1- \sqrt \frac{n}{dt} \right)^{-2} {\rho(\mathcal W(t))}\]
holds for all systemic controllability measures. Moreover, the maximum scaling ratio over all time and inputs is bounded by
\[\max_{i \in [m],k+1 \in [t]} s_i^2(k) ~\leq~\gamma,\]
where $\gamma = \left(1+\sqrt \frac{m}{d}\right)^2$.
\end{corollary}

\begin{proof}
The proof is a simple variation on the proof of Theorem \ref{th-main-approx}, and is not repeated here.
\end{proof}

As expected, the above result shows that the scaling becomes smaller as the ratio $m/d$ decreases. 

 \begin{algorithm}[t]
 {\small
    \SetKwInOut{Input}{Input}
    \SetKwInOut{Output}{Output}

    \Input{$A \in \R^{n \times n}$, $B \in \R^{n \times m}$, $t$ and $d$}
            \vspace{.1cm}
    \Output{$s_i(k) \geq 0$ for $(i,k+1) \in [m] \times [t]$}
    		\vspace{.1cm}
$\mathcal C(t) := \left [ B~AB~A^2B~\cdots~A^{t-1}B \right ]$\\
        \vspace{.1cm}
Set $V= \left(\mathcal C(t) \mathcal C^\top(t)\right)^{-\frac{1}{2}}\mathcal C(t)$\\
        \vspace{.1cm}
Set \[U~=~\frac{1}{\sqrt{t}} \left[ \underbrace{\text e_1, \ldots,\text e_m}_{=I_m}, \cdots, \underbrace{\text e_1, \ldots,\text e_m}_{=I_m} \right]\]\\
\tcp{where $\text e_i \in \R^{m}$ for $i \in [m] $ are the standard basis vectors for $\R^{m}$ and $UU^\top = I_m$}
        \vspace{.1cm}
Run $[c_1, \cdots, c_{mt}] = \DualSet^* (V, U,dt)$\\
        \vspace{.1cm}
\Return $s_i(k):=\sqrt{c_{i+mk}}$ for $(i,k+1) \in [m] \times [t]$\\
        \vspace{.1cm}
\caption{\small A deterministic greedy-based algorithm to construct a sparse weighted actuator schedule (Corollary \ref{th-main-3}).}
    \label{al-max-2}}
\end{algorithm}

{\begin{corollary}
\label{th-main-3}
Given the  time horizon $t \geq n$, model \eqref{model:a}, and $d > 1$,  
Algorithm \ref{al-max-2} deterministically constructs an actuator schedule for \eqref{model:b} in $\mathcal O\left ( dm(tn)^2 \right)$ operations such that it has, on average, at most $d$ active actuators, and the following
\[{\rho(\mathcal W_s(t))} \leq \left ( 1- \sqrt \frac{n}{dt} \right)^{-2} {\rho(\mathcal W(t))}\]
holds for all systemic controllability measures. Moreover, the sum of scaling ratios for all inputs is bounded by
\[\max_{i \in [m]} \sum_{k=0}^{t-1} s^2_i(k) ~\leq~  \gamma,\]
where $\gamma = t \left(1+\sqrt{\frac{m}{dt}}\right)^2$.

\end{corollary}}
\begin{proof}
{The proof is a simple variation on the proof of Theorem \ref{th-main-approx}, and is not repeated here.}
\end{proof}

 \begin{algorithm}[t]
 {\small
    \SetKwInOut{Input}{Input}
    \SetKwInOut{Output}{Output}

    \Input{$A \in \R^{n \times n}$, $B \in \R^{n \times m}$, $t$ and $d$}
            \vspace{.1cm}
    \Output{$s_i(k) \geq 0$ for $(i,k+1) \in [m] \times [t]$}
    		\vspace{.4cm}
$\mathcal C(t) := \left [ B~AB~A^2B~\cdots~A^{t-1}B \right ]$\\
        \vspace{.1cm}
Set $V= \left(\mathcal C(t) \mathcal C^\top(t)\right)^{-\frac{1}{2}}\mathcal C(t)$\\
        \vspace{.1cm}
Set \[U~=~ \frac{1}{\sqrt{m}} \left [\underbrace{\text e_1, \ldots,\text e_1}_{m~\text{times}}, \cdots, \underbrace{\text e_t, \ldots,\text e_t}_{m~\text{times}} \right]\]\\
\tcp{where $\text e_i \in \R^{t}$ for $i \in [t]$ are the standard basis vectors for $\R^{t}$ and $UU^\top = I_t$}
        \vspace{.1cm}
Run $[c_1, \cdots, c_{mt}] = \DualSet^* (V, U,dt)$\\
        \vspace{.1cm}
\Return $s_i(k):=\sqrt{c_{i+mk}}$ for $(i,k+1) \in [m] \times [t]$\\
        \vspace{.1cm}
\caption{\small A deterministic greedy-based algorithm to construct a sparse weighted actuator schedule (Corollary \ref{th-main-4}).}
    \label{al-max-3}}
\end{algorithm}

{\begin{corollary}
\label{th-main-4}
Given the  time horizon $t \geq n$, model \eqref{model:a}, and $d > 1$,  
Algorithm \ref{al-max-3} deterministically constructs an actuator schedule for \eqref{model:b} in $\mathcal O\left (dm(tn)^2\right)$ operations such that it has, on average, at most $d$ active actuators, and the following
\[{\rho(\mathcal W_s(t))} \leq \left ( 1- \sqrt \frac{n}{dt} \right)^{-2} {\rho(\mathcal W(t))}\]
holds for all systemic controllability measures. Moreover, the sum of scaling ratios at each time is bounded by
\[\max_{k+1 \in [t]} \sum_{i=1}^{m} s^2_i(k) ~\leq~\gamma,\]
where $\gamma =  m\left(1+\sqrt{\frac{1}{d}}\right)^2$.
\end{corollary}}
\begin{proof}
The proof is a simple variation on the proof of Theorem \ref{th-main-approx}, and is not repeated here.
\end{proof}

We use a different idea in Subsection \ref{sec:leverage}, to develop scalable algorithms that sparsify control inputs by employing a sub-sampling
method for a time-varying actuator schedule. This however come at the cost of an extra $\log$ factor in terms of the average number of selected actuators.

\subsection{Randomized Approach: Sampling Based on the Leverage Score}
\label{sec:leverage}
In this subsection, we focus on a computationally tractable method for the weighted sparse actuator scheduling problem that achieve near optimal solution.

\begin{definition}
The leverage score of the $i$-th column of matrix $P \in \R^{n \times m}$ is defined as
\[ \ell_i~=~ p_i^\top (PP^\top)^\dag p_i, \]
where $p_i$ is the $i$-th column of matrix $P$.
\end{definition}

This quantity encodes the importance of the $i$-th column compared to the other columns. A larger leverage score shows that the corresponding column has more influence on the spectrum of $P$. Based on the leverage score definition, we get $\ell_i \in [0,1]$ for all $i \in [m]$. Because $\ell_i$'s are the diagonal elements of the projection matrix $P^\top (PP^\top)^{-1}P$ and the diagonal elements of the projection matrix are between zero and one. Leverage score $\ell_i=1$ means that the $i$-th column has a component orthogonal to the rest of the columns. Therefore, eliminating that column will decrease the rank of matrix $P$. On the other hand, $\ell_i=0$ means that the $i$-th column is parallel to the rest of the columns.
 When the corresponding matrix is the graph Laplacian, this quantity reduces to the effective resistance of each link in a graph \cite{Spielman}.

We group the columns of $\mathcal C(t)$ in the following form
\[  \mathcal C(t) = \left [ \underbrace{\left [b_1~Ab_1~\cdots A^{t-1}b_1\right ]}_{\mathcal C_1(t)}~ \cdots ~\underbrace{\left [b_m~Ab_m~\cdots A^{t-1}b_m \right ] }_{\mathcal C_m(t)}\right ], \]
where $b_j$ is the $j$-th column of matrix $B$. Matrix $\mathcal C_j(t)$ presents the controllability matrix of input $j$ at time $t$.  
The leverage score for each column of $\mathcal C(t)$ is defined as
 \begin{equation}
 \ell(A^ib_j) ~=~  (A^{i} b_j)^\top \left (\mathcal C(t) \, \mathcal C^\top(t)\right)^{\dag} A^{i} b_j,
 \label{eq:561}
 \end{equation}
  where $(i+1) \in [t] $ and $j \in [m]$. For these  scores, we have
\begin{eqnarray} \sum_{i=0}^{t-1}\sum_{j=1}^m \ell(A^ib_j) &=&\tr \left( \mathcal C^\top(t) (\mathcal C(t) \mathcal C^\top(t))^{\dag} \mathcal C(t) \right)\nonumber \\
 &=&\tr \left(  (\mathcal C(t) \mathcal C^\top(t))^{\dag} \mathcal C(t) \mathcal C^\top(t)\right)\nonumber \\
 &=&\tr (  I_n ) ~=~n.
\label{eq:631}
\end{eqnarray}
In \eqref{eq:631}, we use the fact that $\tr( AB )=\tr(BA)$ (i.e., the matrices in a trace of a product can be switched without changing the result as long as $A$ and $B^\top$ have the same dimensions), and $\Rank(\mathcal C(t)) = n$ (i.e., the system is controllable).

 \begin{algorithm}[t]
 {\small
    \SetKwInOut{Input}{Input}
    \SetKwInOut{Output}{Output}

    \Input{$A \in \R^{n \times n}$, $B \in \R^{n \times m}$, $t$ and $d$}
            \vspace{.1cm}
    \Output{$\{\sigma_i\}_{i=0}^{t-1}$ and $s_i(k-1)$ for $(i,k) \in [m] \times [t]$}
    		\vspace{.4cm}
$\mathcal C(t) := \left [ B~AB~A^2B~\cdots~A^{t-1}B \right ]$\\
\vspace{.2cm}
{\bf set} $\{\sigma_i\}_{i=0}^{t-1}$ to be the empty sets (\ie ~$\sigma_i := \{\}$) \\
\vspace{.2cm}
{\bf set} $s_i(k-1)=0$ for $(i,k) \in [m] \times [t]$ \\
\vspace{.2cm}
{\bf set} {$\pi(i,k)= \frac{\tr \left( (\mathcal C(t) \mathcal C^\top(t))^{\dag} A^{t-k} b_i (A^{t-k} b_i)^\top \right)}{n}$ for all $(i,k) \in [m] \times[t]$}\\
\vspace{.2cm}
        \For{$j = 1$ {\it to} $M:=\lceil {dt} \rceil$}
       { $(i,k)$ $\leftarrow$ sample $(i,k)$ from $[m] \times [t]$ with probability distribution $\pi$ \\
       $\sigma_{k-1} = \sigma_{k-1} \cup \{i\}$\\
       \vspace{.1cm}
       $s_i^2(k-1) = s_i^2(k-1) + \frac{1}{M \pi(i,k)}$}
        \Return $\{\sigma_i\}_{i=0}^{t-1}$ {\text and} $s_i(k-1)$ for $(i,k) \in [m] \times [t]$\\
        \vspace{.1cm}
    \caption{\small A simple randomized algorithm to compute a sparse weighted actuator schedule $\{\sigma_i\}_{i=0}^{t-1}$ (Theorem \ref{th:rand}).}
    \label{alg:rand}}
\end{algorithm}

We now randomly sample the actuators with probabilities proportional to their leverage scores to sparsify control inputs. This sampling occurs across time and over all possible actuators at each time (see Algorithm \ref{alg:rand}). At every time,  each actuator is kept active or inactive  according to  probability $\ell(A^ib_j)/n$ where $(i+1) \in [t] $ and $j \in [m]$. Using \cite[Thm. 1]{Spielman}, we can construct a sampling strategy that utilizes the leverage score to probabilistically choose actuators. The catch is that there is an extra $\log n$ factor in the average number of selected actuators, and potentially  different actuators are chosen at different times. 

\begin{theorem}
\label{th:rand}
Assume that dynamics \eqref{model:a}, time horizon $t \geq n$, and approximation factor $\epsilon \in \left [1/\sqrt{n},1\right)$ are given. Choose a real number {$d=\frac{9c^2 n \log n}{t\epsilon^2}$, where $c$ is  the constant in Lemma \ref{rudelson}.} Then, Algorithm \ref{alg:rand} produces scheduling \eqref{model:b} which is $\left (\epsilon,d \right )$-approximation of \eqref{model:a} with probability of at least 0.5 for sufficiently large $n$.\footnote{We should note that one can repeat Algorithm \ref{alg:rand}, for example $c = 4$ times to get the desired results with more than $1 - 1/c^4= 0.9375$ probability. Moreover, the probability is improved by increasing the number of iteration $c$. Assume $c$ is a constant and does not depend on $n$, $t$, $d$ or $m$. Then, repeating the algorithm $c$ times does not change the time complexity of the algorithm, and we still have an approximately linear time algorithm. Therefore, by repeating the algorithm and choosing the best result, we can obtain the same error bound with higher probability. }
\end{theorem}

\begin{proof}
{The structure of the proof follows from the proof of \cite[Thm. 4]{Spielman}. Let us start with the following projection matrix
	\begin{equation}
	 \Pi ~=~ \mathcal C(t)^\top \mathcal W^{-1}(t)\, \mathcal C(t),
	 \label{eq::1107}
	 \end{equation}
where $\mathcal C(t)$ is $n$-by-$tm$ controllability matrix \eqref{control-matrix} and matrix $\mathcal W(t)=\mathcal C(t) \, \mathcal C^\top(t)$ is given by \eqref{gramian}. 
%
The $tm$-by-$tm$ projection matrix $\Pi$ has eigenvalue at $0$ with multiplicity  $t \times m-n$  and eigenvalue at $1$ with multiplicity  $n$. Therefore, we get 
\begin{equation}
\tr(\Pi)~=~\Rank(\Pi)~=~n.
\label{1134}
\end{equation}
The set $X$ is obtained based on columns of $\Pi$ as follows
\[X ~=~ \big \{ y_j \in \R^n~:~ y_j = \left (\pi_j\right)^{-1/2} \Pi (., j),~{\text{and}}~ j \in [tm] \big\} , \]
where matrix $\Pi$ is given by \eqref{eq::1107}, vector $\Pi (., j)$ is the $j$-th column of $\Pi$, and $\pi_j$ is probability of selecting vector $y_j$ (\ie, $\pi(y_j)= \pi_j$). The probability distribution $\pi$ over $X$ is defined by 
\begin{equation}
 \pi(y_j) := \pi_j = \frac{\Pi(j,j)}{\Rank(\Pi)}= \frac{\Pi(j,j)}{n},
 \label{eq:1028}
 \end{equation}
 where $\Pi(j,j)$ is the $j$-th diagonal element of matrix $\Pi$ and $j \in [tm]$. Based on \eqref{eq::1107}, each columns of $\Pi$ corresponds to the $i$-th input at time $k-1$ where $(i,k) \in [m] \times[t]$. The mapping comes from the controllability matrix structure \eqref{control-matrix}, and we define it as $\mathfrak m(.): [mt] \rightarrow [m] \times [t]$, where 
 \begin{equation}
 \mathfrak m(j) = (i,k)= \big (j-t\lfloor \frac{j}{t}\rfloor,t-\lfloor \frac{j}{m}\rfloor \big ).
 \label{1127}
 \end{equation}
 This means the $j$-th column of $\Pi$ corresponds to  $(i,k)$, where $k=t-\lfloor \frac{j}{m}\rfloor$ and $i=j-t\lfloor \frac{j}{t}\rfloor$. Thus, in Algorithm \ref{alg:rand}, for notational simplicity we denote $\pi(y_j):=\pi(\mathfrak m(j)) =\pi(i,k)$.
For each element of $X$, we have
\begin{eqnarray}
\|y_j\| &=& \left (\pi_j\right)^{-1/2} \| \Pi (., j) \| = \left ( \frac{n}{\Pi(j,j)} \right)^{1/2} \times (\Pi(j,j))^{1/2}\nonumber \\
 &=&  \sqrt n.
 \label{1128}
\end{eqnarray}
where, we use the fact that
\[ \Pi (., j)^\top \Pi (., j) = \Pi(j,j), \]
because $\Pi$ is an orthogonal projection matrix (\ie, $\Pi \Pi = \Pi$). Then, using \eqref{1134} and \eqref{1128}, we have
\begin{eqnarray}
 \E(yy^\top) &=& \sum_{j=1}^{tm} \pi_j y_j y_j^\top ~=~ \sum_{j=1}^{tm} \pi_j \frac{1}{\pi_j}\Pi (., j) \Pi (., j)^\top \nonumber \\
 &=& \Pi \Pi ~=~ \Pi,
 \label{1158}
\end{eqnarray}
where $y$ is a random variable vector with a countable set of outcomes $X$ occurring with probabilities $\pi$ defined by \eqref{eq:1028}.
%
%
Let $\hat y_1, \ldots, \hat y_M$ be independent samples drawn from $\pi$, then, based on Algorithm \ref{alg:rand}, then we have
\begin{eqnarray}
\Pi \Gamma \Pi&=&\sum_{j=1}^{tm} \Gamma(j,j)\Pi(.,j)\Pi(.,j)^\top \nonumber \\
&=&\sum_{j=1}^{tm} s_i^2(k-1)\Pi(.,j)\Pi(.,j)^\top \nonumber \\
&=& \sum_{j=1}^{tm} \frac{\text{\footnotesize \# of times $(i,k)$ is sampled}}{M \pi(i,k)} \Pi(.,j)\Pi(.,j)^\top \nonumber \\
&=& \frac{1}{M} \sum_{j=1}^{tm} \frac{\text{\footnotesize \# of times $(i,k)$ is sampled}}{ \pi(i,k)} \Pi(.,j)\Pi(.,j)^\top \nonumber \\
&=& \frac{1}{M} \sum_{j=1}^M \hat y_j \hat y_j^\top,
\label{1170}
\end{eqnarray}
where $\Gamma$ is a nonnegative diagonal matrix and the random entry $\Gamma(j, j)$ specifies the ``amount" of the $i$-th input at time $k-1$ (where $\mathfrak m(j)=(i,k)$) included in the sparse actuator scheduling by Algorithm \ref{alg:rand}. For instance, $\Gamma(j, j) = 1/M\pi(\mathfrak m(j))$ if he $i$-th input at time $k-1$ is sampled once, $2/M\pi(\mathfrak m(j))$ if it is sampled twice, and zero if it is not sampled at all. The scaling of the $i$-th input at time $k-1$ in the scheduling is given by $ s_i^2(k-1)= \Gamma (j,j)$ where $\mathfrak m(j)=(i,k)$.
We next use a concentration lemma to prove this theorem. Using Lemma \ref{rudelson}, \eqref{1128}, \eqref{1158}, and \eqref{1170}, we get
\begin{eqnarray}
\E \left \|\frac{1}{M} \sum_{i=1}  \hat y_i \hat y_i^\top-\E y y^\top \right \|_2 &=& \E  \| \Pi \Gamma \Pi - \Pi \|_2 \\
& \leq& \min \left (1, c\sqrt{\frac{n\log M}{M}}\right),
\end{eqnarray}
where $c$ is an absolute constant.
Assuming $M= 9c^2 n \log n/\epsilon^2$ gives\footnote{It can be shown that $M=4 n \log n\epsilon^2$ would be enough to get $\epsilon$ approximation with high probability \cite{LectureNote}.}
\begin{eqnarray}
 \E  \| \Pi \Gamma \Pi - \Pi \|_2 &\leq& c\sqrt{\frac{n\log M}{M}} \nonumber \\
 &\leq&  \epsilon \sqrt{\frac{\log(9c^2 n \log n/\epsilon^2)}{9 \log n}}  \leq \epsilon/2,
 \label{1186}
\end{eqnarray}
for $n$ sufficiently large, and $\epsilon$ is assumed to be in $\left [1/\sqrt{n},1\right)$.
By Markov's inequality and \eqref{1186}, we have
\[ \mathbf{Pr} \left[ \left \| \Pi \Gamma \Pi - \Pi \right \| > \epsilon \right]~\leq~0.5,\]
}
which means we have 
\begin{equation}
 \| \Pi -\Pi \Gamma \Pi \|_2~\leq~ \epsilon,
 \label{eq:110}
 \end{equation}
with probability of at least $0.5$. Note that $\Gamma$ is a non-negative diagonal matrix with weights $s_i^2(k)$ on its diagonal such that $\mathcal W_s(t) = \mathcal C(t) \, \Gamma \, \mathcal C^\top(t)$. Based on \cite[Lemma 4]{Spielman}, the inequality \eqref{eq:110} is equivalent to
 \begin{equation}
\sup_{x \in \R^{tm} \atop x \neq 0 } \frac{| x^\top (\Pi -\Pi \Gamma \Pi) x |}{x^\top x}~ \leq~ \epsilon.
\label{eq:1125}
 \end{equation}
Since we have $\Image\{\mathcal C^{\top}(t) \} \subset \R^{mt}$, it follows that 
 \begin{eqnarray*}
 \sup_{x\in \Image\{  \mathcal C^{\top}(t)\} \atop x \neq 0 } \frac{| x^\top (\Pi -\Pi \Gamma \Pi) x |}{x^\top x} & \leq & \sup_{x \in \R^{mt} \atop x \neq 0 } \frac{| x^\top (\Pi -\Pi \Gamma \Pi) x |}{x^\top x} \\
& \leq & \epsilon.
\end{eqnarray*}
Let us define $x= \mathcal C^\top(t) x'$. Then, we rewrite \eqref{eq:1125} as follows
 \begin{equation}
\sup_{x' \in \R^n \atop x'\notin \Ker\{ \mathcal C^\top(t)\}} \frac{| x'^\top (\mathcal W(t) - \mathcal W_s(t)) x' |}{x'^\top \mathcal W(t) x'} ~\leq~ \epsilon.
\label{eq:?}
 \end{equation} 
As a result, it follows that 
 \begin{equation}
\sup_{x' \in \R^n \atop x'\neq 0} \frac{| x'^\top (\mathcal W(t) - \mathcal W_s(t)) x' |}{x'^\top \mathcal W(t) x'} ~\leq~ \epsilon,
 \end{equation}
 which implies that 
 \begin{equation}
 (1-\epsilon) \mathcal W(t)  ~\preceq~ \mathcal W_s(t) = \mathcal C(t) \Gamma  \mathcal C^\top(t)  ~\preceq~ (1+\epsilon) \mathcal W(t).
 \label{eq:695}
 \end{equation}
 Finally, using \eqref{eq:695} and Definition \ref{def:aprox}, {it is straightforward to show that for every systemic controllability measure $\rho: \mathbb S_+^n \rightarrow \R_+$, we have
\begin{equation*}
	\left| \frac{{\rho}(\mathcal W(t))-{\rho}(\mathcal W_s(t))}{{\rho}(\mathcal W(t))}	\right|~\leq~ \epsilon.	
	\end{equation*}  }
Therefore, we conclude the desired result.
\end{proof}

This result shows that with a simple randomized sampling strategy, one can choose on average less than $\mathcal O(\log n/\epsilon^2)$ number of actuators at each time, to approximate any of  the controllability metrics when $t=n$. {Moreover, this result shows that it is possible to have a time-varying actuator schedule with a constant number of active actuators on average over a time horizon a little longer than $n$ (i.e., $t=\mathcal O(n \log n)$) via random sampling. {Algorithm \ref{alg:rand} computes the sparse actuator schedule using a  nearly-linear time $\tilde{\mathcal O}(mt)$ algorithm\footnote{$ f(n)\in {\tilde {\mathcal O}}\left(g(n)\right)$ means that there exists $c>0$ such that $f(n)\in \mathcal O\left (g(n)\log ^{c}g(n)\right )$.} with guaranteed performance bounds, where $mt$ is the total number of actuations (time-to-control $\times$ number of inputs ). This favorable almost-linear-time complexity is achieved by random sampling of actuators in both time and domain based on their leverage scores \cite{Spielman}.}
According to Theorem \ref{th-main-approx}, the average number of active actuators can be reduced to $\mathcal O(1/\epsilon^2)$, at the expense of either solving SDPs~\cite{lee2017sdp} or greedily handling certain eigenvalue bounds (see Algorithm \ref{al-approx}). Algorithm \ref{alg:rand} is conceptually simpler than Algorithm \ref{al-approx} and the SDP-based algorithm presented in \cite{lee2017sdp}, which provide $d=\mathcal O(1/\epsilon^2)$ in $\mathcal O(m(tn)^2/\epsilon^2)$ and $\tilde{ \mathcal O} \left( mt / \epsilon^{\mathcal O(1)}\right)$  time, respectively.

The concept of a  leverage score for each column can be generalized to a group of columns as follows
\begin{equation}
 \ell_{\mathcal C_i} ~=~ \tr \big( \mathcal C_i^\top(t) \left (\mathcal C(t) \, \mathcal C^\top(t)\right)^{\dag} \mathcal C_i(t) \big).
 \end{equation} 
 Using group leverage scores, one can also use a greedy heuristic algorithm to obtain an approximation solution for the static scheduling problem. We note that the problem of approximation of the controllability Gramian with a sparse, static actuator set is considerably more challenging as it doesn't lend itself to a sampling-based strategy: any choice made at one time has to be consistent with the next. 
 
When using a time-varying schedule, the contribution of each actuator to the Gramian  at each time  is  a rank-one matrix. Therefore, we can use the machinery developed for the Kadison-Singer conjecture to find a sparse subset  of actuators over time to approximate the (potentially very large) sum of rank-one matrices. In the static case, however, the choices of actuators at different times are all the same. As a result, the Gramian can be written as a sum of positive semi-definite  matrices corresponding to the selected actuators at each time.  Finding a sparse approximation in this case would require a generalization of the Kadison-Singer conjecture from sums of rank-one to sums of higher ranked positive semidefinite matrices. Such a result has remained elusive as of yet.}

\section{An Unweighted  Sparse Actuator Schedule}
\label{sec:unweighted}
In the previous section, we allowed for re-scaling of the input to come up with a sparse approximation of the Gramian. Here, we assume that the actuator/signal strength cannot be arbitrarily set for individual active actuators and only can be $0$ or $1$. Given a time horizon $t \geq n$, our problem is to compute an actuator schedule $\sigma = \{\sigma_k\}_{k=0}^{t-1}$ where $\sigma_k \subset [m]$ for the system \eqref{model:a}, i.e.,  
\begin{equation}
 x(k+1) ~=~A\,x(k) ~+~\sum_{i \in \sigma_k} b_i \, u_i(k), ~k \in \Z_{+}.
 \label{model:schedul}
 \end{equation}
 As before, the controllability Gramian at time $t$ for schedule \eqref{model:schedul} is given by
\begin{equation}
\mathcal W_{\sigma}(t) ~:=~ \sum_{i=0}^{t-1} \sum_{j \in \sigma_i} (A^{t-i-1} b_j)(A^{t-i-1}b_j)^\top.
\label{W-sigma}
\end{equation}
%

Optimal actuator selection can now be formulated as a combinatorial optimization problem. We consider both static and dynamic actuator schedules, corresponding to time-invariant and time-varying input matrices.

\subsubsection{The Static Scheduling Problem} In this case, all sets $\sigma_i \subset [m]$ for $i+1 \in [t]$ are identical, which means we keep the same schedule at every point in time  for the whole time horizon $t$:
\begin{equation}
\min_{\sigma \in \mathcal S(m,d_{\max})} \rho \left (\sum_{i=0}^{n-1} \sum_{j \in \sigma} (A^i b_j)(A^i b_j)^\top \right),
\label{p:1}
\end{equation}
where 
\begin{equation}
\mathcal S(m,d_{\max}) := \{ \sigma : \sigma \subset [m], \card(\sigma) \leq d_{\max} \},
\end{equation}
where $d_{\max}$ is a given upper bound on the number of active actuators at each time, and $m$ is the total number of actuators.

\subsubsection{The Time-varying Scheduling Problem}
In this case, the optimal dynamic strategy is given as:  
\begin{equation}
\min_{\{\sigma_i \}_{i=0}^{t-1} \in \mathcal S(m,d_{\max},t)} \rho \left (\sum_{i=0}^{t-1} \sum_{j \in \sigma_i} (A^{t-i-1} b_j)(A^{t-i-1} b_j)^\top \right),
\label{p:2}
\end{equation}
where
\begin{equation}
\mathcal S(m,d_{\max},t) :=\left \{ \{\sigma_i \}_{i=0}^{t-1} : \sigma_i \subset [m], \sum_{i=0}^{t-1} \card
(\sigma_i) \leq t d_{\max} \right \},
\end{equation}
and $d_{\max}$ is a given upper bound on the average number of active actuators at each time, i.e., $d_{\max}~\geq~\sum_{i=0}^{t-1} \card( \sigma_i )/t$, where $t$ is a time horizon, and $m$ is the total number of actuators.

The exact combinatorial optimization  problems \eqref{p:1} and \eqref{p:2} are intractable and NP-hard optimization problems; however, it is straightforward to solve a continuous relaxation of these optimization problems where the cost function $\rho$ is convex. {To find a near-optimal solution of optimization problems \eqref{p:1} and \eqref{p:2}, one can use a variety of standard methods for optimal experimental design (greedy methods, sampling methods,  the classical pipage rounding method combined with SDP).  {Specifically, in the case of submodular systemic controllability measures (e.g., D- and T-optimality), the classical rounding method (e.g., pipage and randomized rounding) combined with SDP relaxation results in computationally fast algorithms with a constant approximation ratio \cite{ravi2016experimental}. These approaches are not applicable to non-submodular systemic measures, such as A-, and E-optimality \cite{olshevsky2017non, allen-zhu17e}.} 

 In the following result, we use a result based on regret minimization of the least eigenvalues of positive semi-definite matrices (cf. \cite{allen-zhu17e}) to obtain  a constant approximation ratio for all systemic controllability metrics.

\begin{theorem}
\label{th-const}
Assume that time horizon $t \geq n$, dynamics \eqref{model:a}, systemic controllability metric $\rho: \mathbb S_+^{n} \rightarrow \R$, and $d_{\max} > 2$ are given. Then there exists a polynomial-time algorithm which computes a schedule $\hat \sigma = \{ \hat \sigma_i\}_{i=0}^{t-1}$ that satisfies 
\[  \rho (\mathcal W_{\hat \sigma}(t) )~\leq~ \gamma\left(\frac{d_{\max}t}{n}\right) . \min_{\{\sigma_i\}_{i=0}^{t-1} \in \mathcal S(m,d_{\max},t)} \rho \left ( \mathcal W_\sigma(t) \right),\]
where $\gamma(d_{\max}t/n)$ is a positive constant depending only on $d_{\max}t/n$.
\end{theorem}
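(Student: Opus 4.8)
The plan is to follow the \emph{relax--solve--round} paradigm, using the axioms of Definition \ref{metric} to turn a one-sided Loewner bound into a multiplicative bound on $\rho$. First I would write the time-$t$ Gramian of an arbitrary schedule as a sum of rank-one terms $v_{ij}v_{ij}^\top$ with $v_{ij}:=A^{i}b_j$, $i+1\in[t]$, $j\in[m]$, so that the integral problem \eqref{p:2} is the choice of a $0/1$ pattern $w_{ij}\in\{0,1\}$ with $\sum_{i,j}w_{ij}\le dt$ and objective $\rho\big(\sum_{i,j}w_{ij}\,v_{ij}v_{ij}^\top\big)$. Relaxing to $w_{ij}\in[0,1]$ gives a convex program: the Gramian is affine in $w$ and $\rho$ is convex by Definition \ref{metric}, so the objective is convex, and the feasible set is a polytope. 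Given an evaluation/subgradient oracle for $\rho$ (available for every entry of Table \ref{Table:PerformanceMeasures}), the relaxation is solvable in polynomial time, and because every integral feasible point is fractional feasible, its optimum $\rho(\mathcal W^\star)$ lower-bounds the integral optimum $\min_{\{\sigma_i\}\in\mathcal S(m,d,t)}\rho$.

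Next, let $w^\star$ be a fractional optimizer and $\mathcal W^\star:=\sum_{i,j}w^\star_{ij}\,v_{ij}v_{ij}^\top\succ 0$ (positive definiteness is forced, since otherwise $\rho(\mathcal W^\star)=\infty$, and a full-rank feasible point exists because $dt\ge n$). Whitening by $(\mathcal W^\star)^{-1/2}$ produces a \emph{fractional} decomposition of the identity, $\sum_{i,j}w^\star_{ij}\,\bar v_{ij}\bar v_{ij}^\top=I_n$ with $\bar v_{ij}:=(\mathcal W^\star)^{-1/2}v_{ij}$ and $\sum_{i,j}w^\star_{ij}\le dt$. The heart of the argument is then to invoke the regret-minimization rounding for the least eigenvalue of sums of PSD rank-one matrices from \cite{allen-zhu17e}: in polynomial time it selects an \emph{unweighted} subset $\hat\sigma$ of at most $dt$ pairs such that $\sum_{(i,j)\in\hat\sigma}\bar v_{ij}\bar v_{ij}^\top\succeq\beta\, I_n$, equivalently $\mathcal W_{\hat\sigma}(t)\succeq\beta\,\mathcal W^\star$, where $\beta=\beta(dt/n)\in(0,1)$ depends only on the oversampling ratio $dt/n$ (of the form $1-c\sqrt{n/(dt)}$). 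Crucially, this is a one-sided spectral guarantee achieved with weights in $\{0,1\}$, exactly matching the unweighted budget constraint, and it is oblivious to $\rho$.

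To finish, I would convert the Loewner bound into a bound on $\rho$ using only the systemic axioms. From $\beta\,\mathcal W^\star\preceq\mathcal W_{\hat\sigma}(t)$ we get $\mathcal W^\star\preceq\beta^{-1}\mathcal W_{\hat\sigma}(t)$, so monotonicity gives $\rho(\beta^{-1}\mathcal W_{\hat\sigma}(t))\le\rho(\mathcal W^\star)$, and homogeneity with $\kappa=\beta^{-1}>1$ gives $\rho(\beta^{-1}\mathcal W_{\hat\sigma}(t))=\beta\,\rho(\mathcal W_{\hat\sigma}(t))$. Combining these, $\rho(\mathcal W_{\hat\sigma}(t))\le\beta^{-1}\rho(\mathcal W^\star)\le\beta^{-1}\min_{\{\sigma_i\}\in\mathcal S(m,d,t)}\rho$. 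Setting $\gamma(dt/n):=\beta(dt/n)^{-1}$ yields the claim, and I would emphasize that $\gamma$ is identical for every systemic metric because the rounding was purely spectral and only this last step invoked $\rho$.

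The main obstacle is the rounding step: obtaining the one-sided lower Loewner bound $\mathcal W_{\hat\sigma}(t)\succeq\beta\,\mathcal W^\star$ with a genuinely $0/1$ selection that respects the cardinality budget $dt$ and with a factor $\beta$ depending only on $dt/n$. This is precisely where the regret-minimization analysis of \cite{allen-zhu17e} is essential; the Dual-Set routine of Lemma \ref{lemma-dualset} produces \emph{weighted} selections and so does not by itself suffice for the unweighted problem. I expect the hypothesis $d>2$ to enter here, ensuring that the integral budget $dt$ oversamples $n$ enough for the rounding guarantee to hold with $\beta(dt/n)$ bounded away from zero.
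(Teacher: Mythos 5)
Your proposal is correct and follows essentially the same route as the paper: the paper's proof of this theorem is simply a deferral to the regret-minimization result of \cite{allen-zhu17e} (their Thm.~1.1), and your relax--whiten--round--convert argument is precisely the structure of that proof transplanted to the Gramian setting, with the systemic axioms (monotonicity plus homogeneity with $\kappa=\beta^{-1}>1$) supplying the final conversion exactly as you describe. Your reading of the hypothesis $d>2$ is also right: combined with $t\ge n$ it ensures the oversampling ratio $dt/n>2$ required for the unweighted rounding factor $\beta(dt/n)$ to be bounded away from zero.
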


\begin{proof}
The proof is a simple variation on the proof of \cite[thm. 1.1]{allen-zhu17e}, and is not repeated here. 
\end{proof}}

The positive constant $\gamma(.)$ in Theorem \ref{th-const} is defined as follows 
\begin{equation} \gamma(\zeta)~=~\min_{y>\frac{3\zeta}{\zeta-2}} \frac{v(2+\frac{v}{\zeta})}{(1-\frac{2}{\zeta})v-3}, \quad \text{where}~\zeta>2,
\label{eq:cons}
\end{equation}
see the proof of \cite[thm. 1.1]{allen-zhu17e}. For example, for $d_{\max}t/n \in \{4, 10 , 50\}$, using \eqref{eq:cons}, we get
\[\gamma(4)~=~2 \left (5 + \sqrt{21}\right)~\approx~ 19.1652,\]
\[\gamma(10)~=~5 \left (\frac{11 + \sqrt{57}}{16}\right)~\approx~ 5.79682,\]
and
\[\gamma(50)~=~25 \left (\frac{17 + \sqrt{33}}{192}\right)~\approx~ 2.96153.\]

Next, we use the results from Section \ref{sec:weighted}  to obtain an unweighted sparse actuator schedule with guaranteed performance bound.

  \begin{algorithm}[t]
   {\small
    \SetKwInOut{Input}{Input}
    \SetKwInOut{Output}{Output}

    \Input{$A \in \R^{n \times n}$, $B \in \R^{n \times m}$, $t$ and $d_{\max}$}
            \vspace{.1cm}
    \Output{$s_i(k)$ for $(i,k+1) \in [m] \times [t]$}
    		\vspace{.4cm}
$\mathcal C(t) := \left [ B~AB~A^2B~\cdots~A^{t-1}B \right ]$\\
        \vspace{.1cm}
Set $V= \left(\mathcal C(t) \mathcal C^\top(t)\right)^{-\frac{1}{2}}\mathcal C(t)$\\
        \vspace{.1cm}
Set \[U~=~ \left[\underbrace{\text e_1, \ldots, \text e_{mt}}_{=I_{mt}}\right]\] \\
\tcp{where $\text e_i \in \R^{mt}$ for $i \in [mt]$ are the standard basis vectors for $\R^{mt}$}
        \vspace{.1cm}
Run $[c_1, \cdots, c_{mt}] = \DualSet^* (V, U,d_{\max}t)$\\
        \vspace{.1cm}
\Return $s_i(k):=\left \lceil \sqrt{c_{i+mk}}/\left(1+\sqrt \frac{m}{d_{\max}}\right) \right \rceil$ for $(i,k+1) \in [m] \times [t]$\\
        \vspace{.1cm}
\caption{\small A deterministic greedy-based algorithm to construct a sparse unweighted actuator schedule (Corollary \ref{th:unweighted}).}
    \label{al-unweighted}}
\end{algorithm}

\begin{corollary}
\label{th:unweighted}
Assume that time horizon $t \geq n$, dynamics \eqref{model:a}, and $d_{\max} > 1$ are given. Then polynomial-time Algorithm \ref{al-unweighted} deterministically constructs an actuator schedule for \eqref{model:b} with $s_i(k) \in \{0,1\}$ such that it has, on average, at most $d_{\max}$ active actuators, and the following
\[  \rho (\mathcal W_{\sigma}(t) )~\leq~ \left (\frac{1+\sqrt{\frac{m}{d_{\max}}}}{1-\sqrt{\frac{n}{d_{\max}t}}}\right)^{2} \rho (\mathcal W(t)),\]
holds for all systemic controllability measures.
%
\end{corollary}

\vspace{.01cm}
\begin{proof}
{The proof is a simple variation on the proof of Theorem \ref{th-main-approx}, and is not repeated here.}
\end{proof}

In view of this result, one can choose any constant number greater than one as the number of active actuators on average to construct a sparse unweighted actuator schedule in order to approximate controllability measures. This, however, comes at the cost of an extra $\left(1+\sqrt \frac{m}{d_{\max}}\right)^2$ factor in terms of the energy cost compared to the weighted sparse actuator schedule (cf. Corollary \ref{th-main-2}).

 \begin{algorithm}[t]
 {\small
    \SetKwInOut{Input}{Input}
    \SetKwInOut{Output}{Output}

    \Input{$A \in \R^{n \times n}$, $B \in \R^{n \times m}$, $t$ and $d$}
            \vspace{.1cm}
    \Output{$B_s \in \R^{n \times d}$, $\rho(\mathcal W_s)$}
    		\vspace{.4cm}	
	$\mathcal W_s := \mathbf 0_{n \times n}$\\	
	 \vspace{.1cm}
        \For{$k = 1$ {\it to} $d$}
       { $j$ $\leftarrow$ find a column of $B$ that returns the maximum value for \begin{small}\[\rho(\mathcal W_s + \alpha I_n) - \rho \left( \mathcal W_s+\sum_{i=0}^{t-1} A^{i} B(:,j)B(:,j)^{\top} (A^{i})^\top + \alpha I_n \right)\]\end{small}
       \tcp{\small $\alpha>0$ is sufficiently small to avoid singularity}
        \vspace{.1cm}
       $B_s \leftarrow \left [B_s, B(:,j)\right]$\\
        \vspace{.1cm}
       $\mathcal W_s =  \sum_{i=0}^{t-1} A^{i} B_{s}B_{s}^{\top} (A^{i})^\top$\\
       \vspace{.1cm}
        $B(:,j) \leftarrow [~]$
       }
       \vspace{.1cm}
        \Return $B_s$, $\rho(\mathcal W_s)$\\
        \vspace{.1cm}
    \caption{\small A greedy heuristic for given $\rho(.)$ which sequentially picks inputs~ $\GreedyStatic(A,B,t,d)$.} 
    \label{alg:greedy}}
\end{algorithm}

 \begin{algorithm}[t]
 {\small
    \SetKwInOut{Input}{Input}
    \SetKwInOut{Output}{Output}

    \Input{$A \in \R^{n \times n}$, $B \in \R^{n \times m}$, $t$ and $d$}
            \vspace{.1cm}
    \Output{$\rho(\mathcal W_s)$}
    		\vspace{.4cm}
	$\mathcal C := \left [ B~AB~A^2B~\cdots~A^{t-1}B \right ]$\\
	$\mathcal C_s := \mathbf 0_{n \times mt}$\\	
	 \vspace{.1cm}
        \For{$k = 1$ {\it to} $M:=\lceil {dt} \rceil$}
       { $j$ $\leftarrow$ find a column of $C$ that returns the maximum value for \begin{small}\[\rho(\mathcal W_s+ \alpha I_n) - \rho \left( \mathcal W_s+ C(:,j)C(:,j)^\top + \alpha I_n\right)\]\end{small}
        \tcp{\small $\alpha>0$ is sufficiently small to avoid singularity}
        \vspace{.1cm}
       $C_s \leftarrow \left [C_s, C(:,j)\right]$\\
        \vspace{.1cm}
       $\mathcal W_s =  C_s C_s^\top$\\
        \vspace{.1cm}
        $C(:,j) \leftarrow [~]$
       }
       \vspace{.1cm}
        \Return $\rho(\mathcal W_s)$\\
        \vspace{.1cm}
    \caption{\small A greedy heuristic for given $\rho(.)$ which sequentially picks inputs and activation times $\GreedyTimeVarying(A,B,t,d)$. }
    \label{alg:greedyVarying}}
\end{algorithm}

\begin{figure*}[h]
	\centering
	 \psfrag{y}[h][h]{ \footnotesize{control inputs}}  
	 \psfrag{x}[t][t]{ \footnotesize{time}}      
	\includegraphics[trim = 0.5 0.5 0.5 0.5, clip,width=.85 \textwidth]{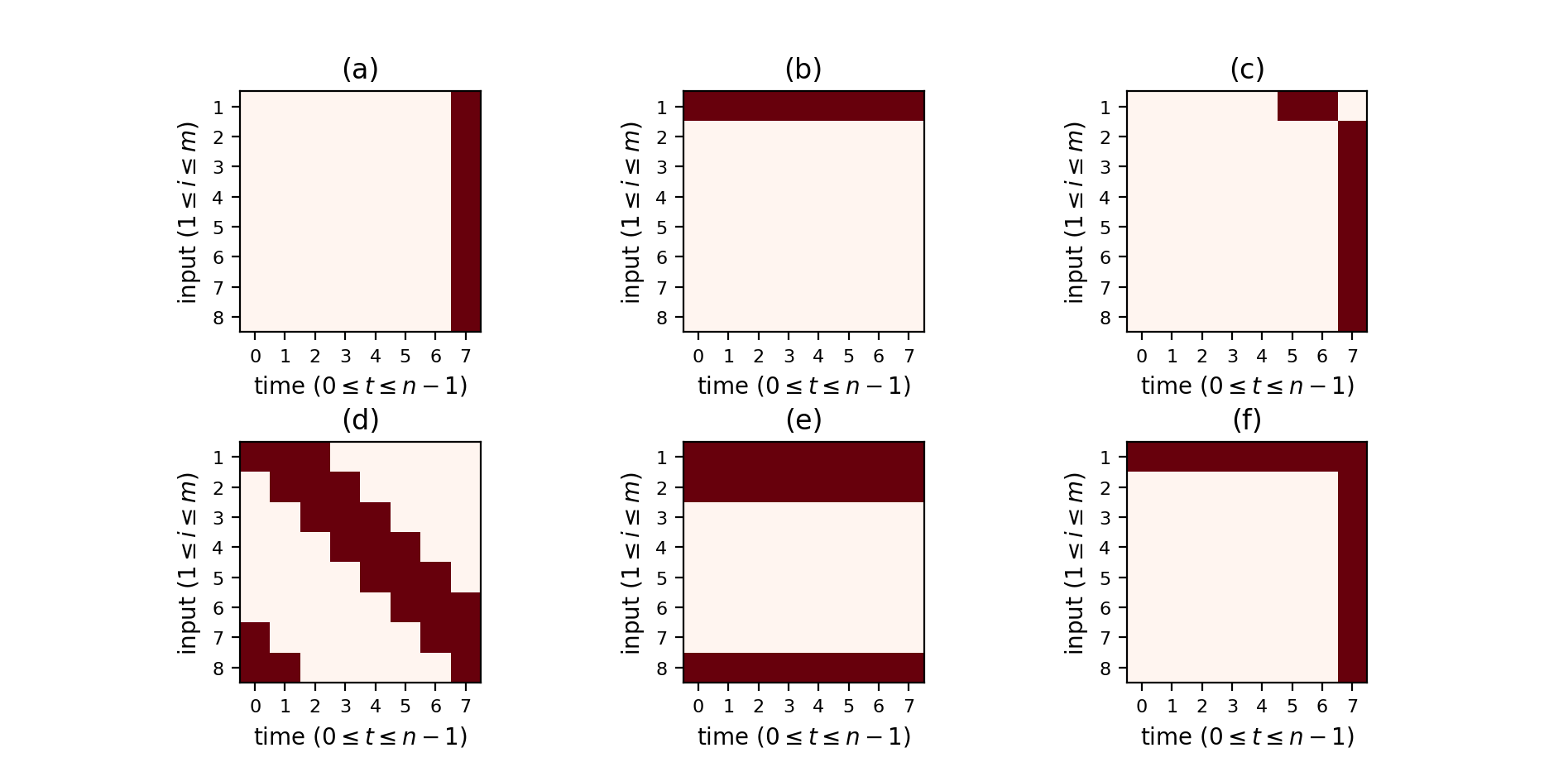} 
	\caption{\small Six unweighted actuator schedules for Example \ref{example_1}: 
      (a) all actuators are active at time $7$ (b) actuator one is active at each time (c) the schedule is obtained Algorithm \ref{al-unweighted} (d) three actuators are active at all time and each actuator is used three times 
      (e) three fixed actuators $\{1,2, 8\}$ are active at all time (f) the proposed sparse schedule based on Algorithm \ref{al-unweighted} with less than two active actuators at each time on average. The color of element $(i,k)$ is red when $s_i(k) =1$ and white otherwise where $i \in [8]$, $k+1 \in [8]$ and $s_i(k) \in \{0,1\}$. For Figs. \ref{fig:6} (c)\&(f), which are obtained based on Algorithm \ref{al-unweighted}, we can observe that the actuator schedule has {procrastination} in actuator activations (i.e., more active actuators at the end of the time horizon); however, in Example \ref{example_3} we can see {``front-loaded"} behavior (i.e., more active actuators early in the time horizon) due to different dynamics in this example.}
	\label{fig:6}
\end{figure*}

\begin{table*}[hh]
	\centering
	{\footnotesize
	\begin{tabular}{ |c|c|c|c|c|c|c|c|c|  }  
	\hline
	&Figs. \ref{fig:6}.(a)\&(b)&Fig. \ref{fig:6}.(c)&Fig. \ref{fig:6}(d)&Fig. \ref{fig:6}.(e)&Fig. \ref{fig:6}.(f)&Algorithm \ref{alg:greedy}&Algorithm \ref{alg:greedyVarying}& Fully Actuated\\
	\hline
	\hline
      $\tr \left (\mathcal W^{-1}(n)\right)$&  uncontrollable & $0.628$&uncontrollable&$0.503$ &$0.161$&  uncontrollable &$0.294$&$0.132$\\ 
      \hline
       $d$& $1$&$1.125$&$3$&$3$&$1.875$&$3$&$3$&$8$\\
      \hline
      \end{tabular}}
      \caption{\small{The values of controllability performance and average number of active actuators at each time for the unweighted actuator schedule presented in Fig. \ref{fig:6} and based on greedy algorithms \ref{alg:greedy} and  \ref{alg:greedyVarying}.  The unweighted schedules presented in Figs. \ref{fig:6} (c)\&(f) are obtained based on Algorithm \ref{al-unweighted}. It is not possible to greedily select three inputs (active at all time) to make the system in Example \ref{example_1} controllable.}}\label{table_2}
\end{table*}

{\section{Numerical Examples}
\label{sec:numerical}
In this section, we consider three numerical examples to demonstrate the results. 

We compare our results with a greedy heuristic that sequentially picks control inputs to maximize the systemic metric decrease of the controllability matrix (see Algorithm \ref{alg:greedy}). The selected inputs are active at all times. It is shown that the greedy method works well and matches the inapproximability barrier\footnote{It approximates the minimum number of inputs in the system that need to be affected for controllability within a factor of $c \log n$ for some $c > 0$.} in polynomial time \cite{Alex2014}.
We also compare our results with a greedy algorithm for a time-varying actuator schedule that sequentially picks both control inputs and activation times to maximize the decrease in the systemic metric of the controllability Gramian (see Algorithm \ref{alg:greedyVarying}).
Without loss of generality, we assume time horizon $t=n$. 

\begin{table*}[hh]
	\centering
	{\footnotesize
	\begin{tabular}{ |c| c| c| c|}  
	\hline
	&~Fig. \ref{fig:sparsity-rand} (Algorithm \ref{alg:rand})~&~Static Leader Schedule~&~Fully Actuated~\\
	\hline
	\hline
       $\tr \left (\mathcal W^{-1}(n)\right)$& 93.64 & 676.68& 18.16  \\ 
      \hline
      Average Number of Leaders:~$d$&~$40$~&~$160$~&~$200$~\\
	\hline
      \end{tabular}}
      \caption{\small{
      The values of controllability performance for three different actuator schedules in Example \ref{example_2}: 1) the weighted actuator schedule in Fig. \ref{fig:sparsity-rand} based on Algorithm \ref{alg:rand},  2) the static leader schedule with $160$  leaders active at all time, 3) the fully actuated case. To have a fair comparison, we normalize the resulting schedule of Algorithm \ref{alg:rand} such that the sum of the scalings satisfies $\sum_{k=0}^{n-1} \sum_{i=1}^m s_i^2(k) = dn$ where $d=40$. The value of the controllability metric for the materialized result of Algorithm \ref{alg:rand} is $18.54$, which is much closer to the controllability metric of the fully actuated case. }}\label{table_example_2}
\end{table*}

\begin{example}[\cite{Alex2014}] \label{example_1}
Assume that the state space matrices of system \eqref{model:a} are given by
\begin{eqnarray}
A~=~\begin{bmatrix}
	1&0&0&0&0&0&0&\frac{-7}{2}\\
 	0&2&0&0&0&0&0&-3\\
 	0&0&3&0&0&0&0&\frac{-5}{2}\\
  	\frac{3}{4}&\frac{1}{2}&0&4&0&0&0&\frac{13}{8}\\
   	0&\frac{3}{4}&\frac{1}{2}&0&5&0&0&\frac{11}{8}\\
    	\frac{5}{4}&0&\frac{3}{4}&0&0&6&0&\frac{3}{2}\\
    	\frac{3}{2}&\frac{5}{4}&1&0&0&0&7&\frac{9}{4}\\
    	0&0&0&0&0&0&0&8
\end{bmatrix}
\label{A_matrix}
\end{eqnarray}
and 
\begin{equation}
B_{\text{min}}~=~\diag \left [1,~ 1,~ 0,~ 0,~ 0,~ 0,~ 0,~ 1\right] 
\label{b_min}
\end{equation}
Direct computation shows that choosing \eqref{b_min} makes the system controllable and no diagonal-matrix sparser than $B_{\text{min}}$ renders $A$ controllable. For this case ($B = B_{\min}$), the performance is:
\[\tr \left ( \sum_{i=0}^{n-1} A^{i} B_{\min}B_{\min}^{\top} (A^{i})^\top \right)^{-1} ~=~ 0.503,\]%
and for the fully actuated case (i.e., $B = I_8$), we have
\[\tr \left ( \sum_{i=0}^{n-1} A^{i} B_{}B{}^{\top} (A^{i})^\top \right)^{-1} ~=~ 0.132.\]
We compare our method with simple-random and periodical switching methods which are depicted in Fig. \ref{fig:6}, and obtain systemic controllability performances, which are presented in Table \ref{table_2}.
\end{example}


\begin{figure}[t]
	\centering
	 \psfrag{y}[h][h]{ \footnotesize{control inputs}}  
	 \psfrag{x}[t][t]{ \footnotesize{time}}      
	\includegraphics[trim = 40.5 30.5 50.5 50.5, clip,width=.4 \textwidth]{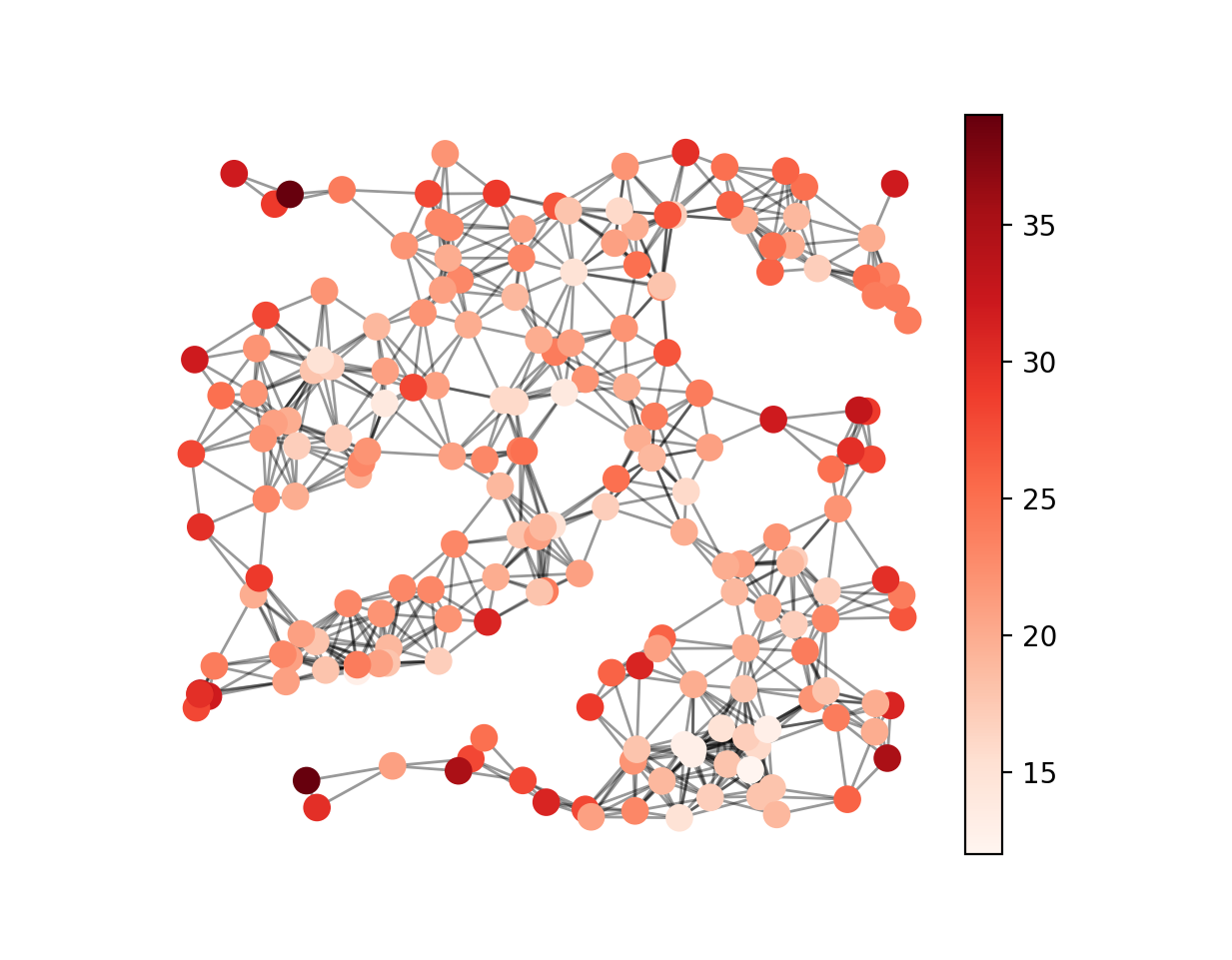} 
	\caption{\small A dynamical network consists of $200$
agents that are randomly distributed in a $1 \times 1$ square-shape
area in space and are coupled over a proximity graph. Every agent
is connected to all of its spatial neighbors within a closed ball of
radius $r = 0.125$. Node colors are proportional to the total number of active steps during time steps $0$ to $199$ from least (white) to greatest (red) based on Algorithm \ref{alg:rand} where $d=40$ (i.e., which means that, on average,  only $20\%$ of agents are controlled at each time).   }
	\label{fig:centrality}
\end{figure}

\begin{figure}[t]
	\centering
	 \psfrag{y}[h][h]{ \footnotesize{control inputs}}  
	 \psfrag{x}[t][t]{ \footnotesize{time}}      
	\includegraphics[trim = 10.5 15.5 10.5 20.5, clip,width=.4 \textwidth]{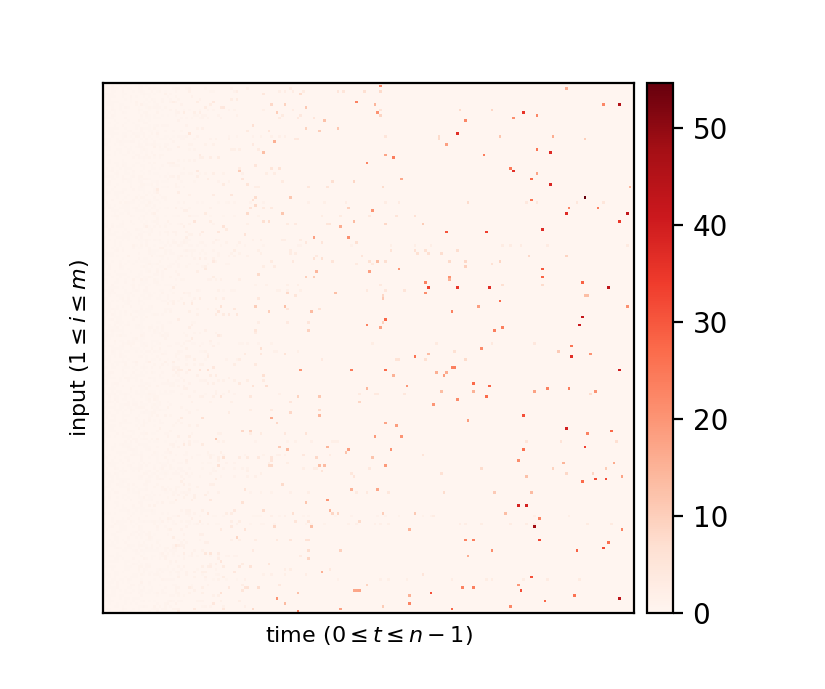} 
	\caption{\small A sparse schedule based on Algorithm \ref{alg:rand} for a given network in Fig. \ref{fig:centrality} where $d=40$. This dynamical network has $m=200$ inputs; however, on average, only $20 \%$ are active at each time between $0$ to $n-1$.  The color of element $(i,k)$ is proportional to the scaling factor $s^2_i(k)$ where $i \in [200]$ and $k+1 \in [200]$.	 }
	\label{fig:sparsity-rand}
\end{figure}

\begin{example} \label{example_2}

Let us consider a dynamic network consisting of $n=200$
agents/nodes, which are randomly distributed in a $1 \times 1$ square-shape
area in space and are coupled over a proximity graph. Every agent
is connected to all of its spatial neighbors within a closed ball of
radius $r = 0.125$. Assume that the state space matrices of this network are given by
\begin{equation}
 A ~=~ I_n - \frac{1}{n} L,~\text{and} ~~ B~=~ I_n,
 \label{matrix::A}
\end{equation}
where $L$ is the Laplacian matrix of the underlying graph given by Fig. \ref{fig:centrality}. Now, we consider the actuator scheduling problem discussed in Section \ref{sec:weighted}.
For undirected consensus networks, a similar problem arises in assignment of a pre-specified number of active agents, as leaders, in order to minimize the controllability metric, e.g., the average controllability energy (cf. \cite{lin2014,rahmani2009}). 
In our setup, each leader $i$ in addition to relative information exchange with its neighbors (based on Laplacian matrix $L$), it also has access to a control input $u_i(.)$.
This system is controllable with only a few inputs/leaders\footnote{The system is not controllable with only one input, because $A$ does not have distinct eigenvalues {\cite{rahmani2009}}.}; however, the amount of the average control energy with a static actuator/leader schedule is too large even for a large number of leaders (see Table \ref{table_example_2}). On the other hand, with a time-varying strategy, the resulting performance is close to the fully actuated case even with a small number of leaders. 
 Therefore, instead of choosing the same leaders at every time step, we choose/switch leaders over a given time horizon to further decrease the controllability metric. 

Fig. \ref{fig:centrality} shows the underlying graph, and node colors are proportional to the total number of active steps during time steps $0$ to $199$ from least (white) to greatest (red). Fig. \ref{fig:sparsity-rand} depicts a sparse schedule based on Algorithm \ref{alg:rand}.

\end{example}

\begin{figure}[t]
	\centering       
	\includegraphics[trim = 0.5 0.5 0.5 0.5, clip,width=.4 \textwidth]{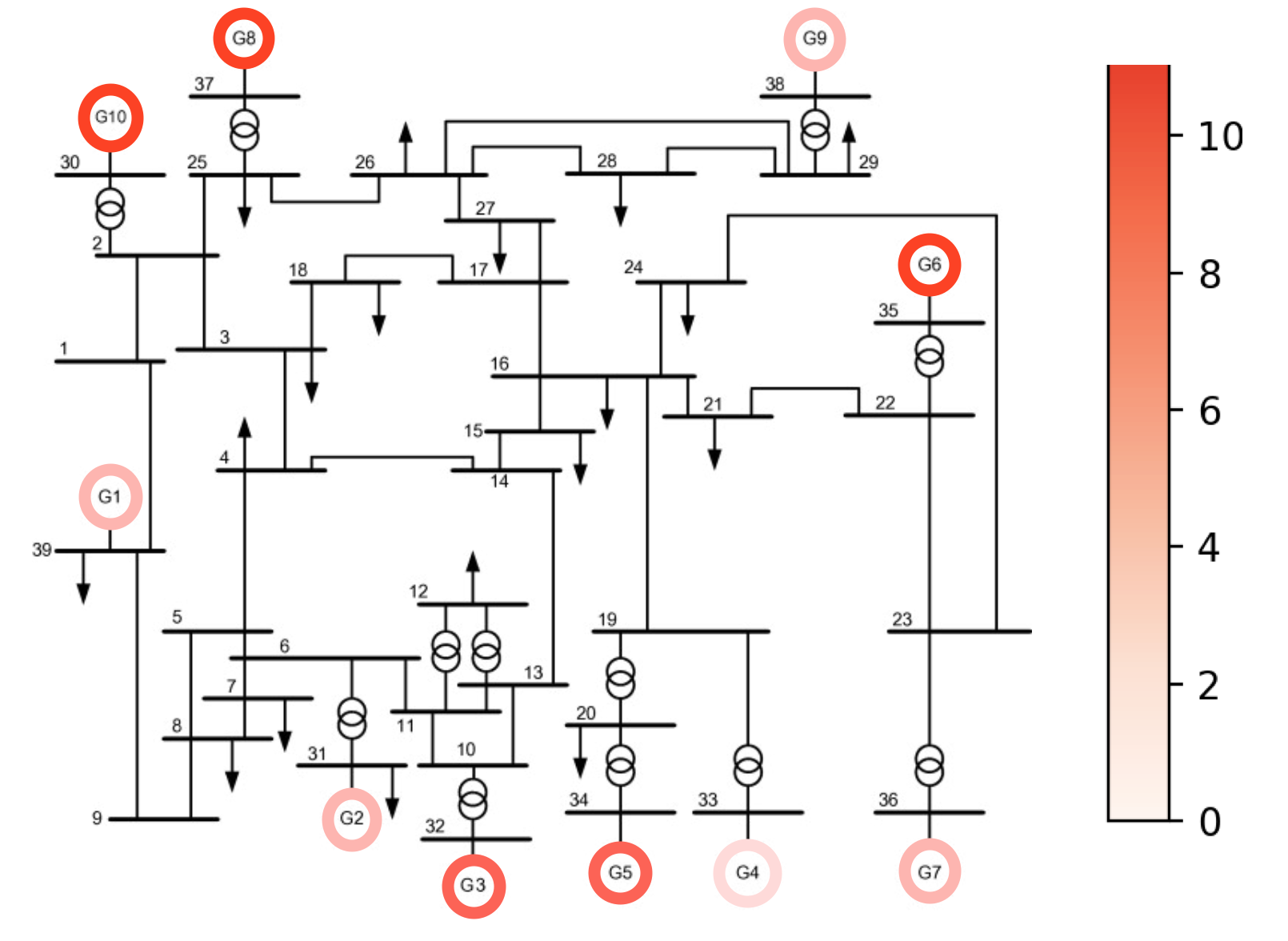} 
	\caption{\small IEEE 10-generator 39-bus power system network (figure is adapted from \cite{atawi2013advance}). Generator colors are proportional to the total number of active steps during time steps $0$ to $19$ from least (white) to greatest (red) based on Algorithm \ref{al-unweighted} where $d=4$ (i.e., which means that, on average,  only four generators are controlled at each time). 	 }
	\label{fig:IEEE-39}
\end{figure}

\begin{figure*}[h]
	\centering
	 \psfrag{y}[h][h]{ \footnotesize{control inputs}}  
	 \psfrag{x}[t][t]{ \footnotesize{time}}      
	\includegraphics[trim = 5.5 5.5 5.5 5.5, clip,width=.8 \textwidth]{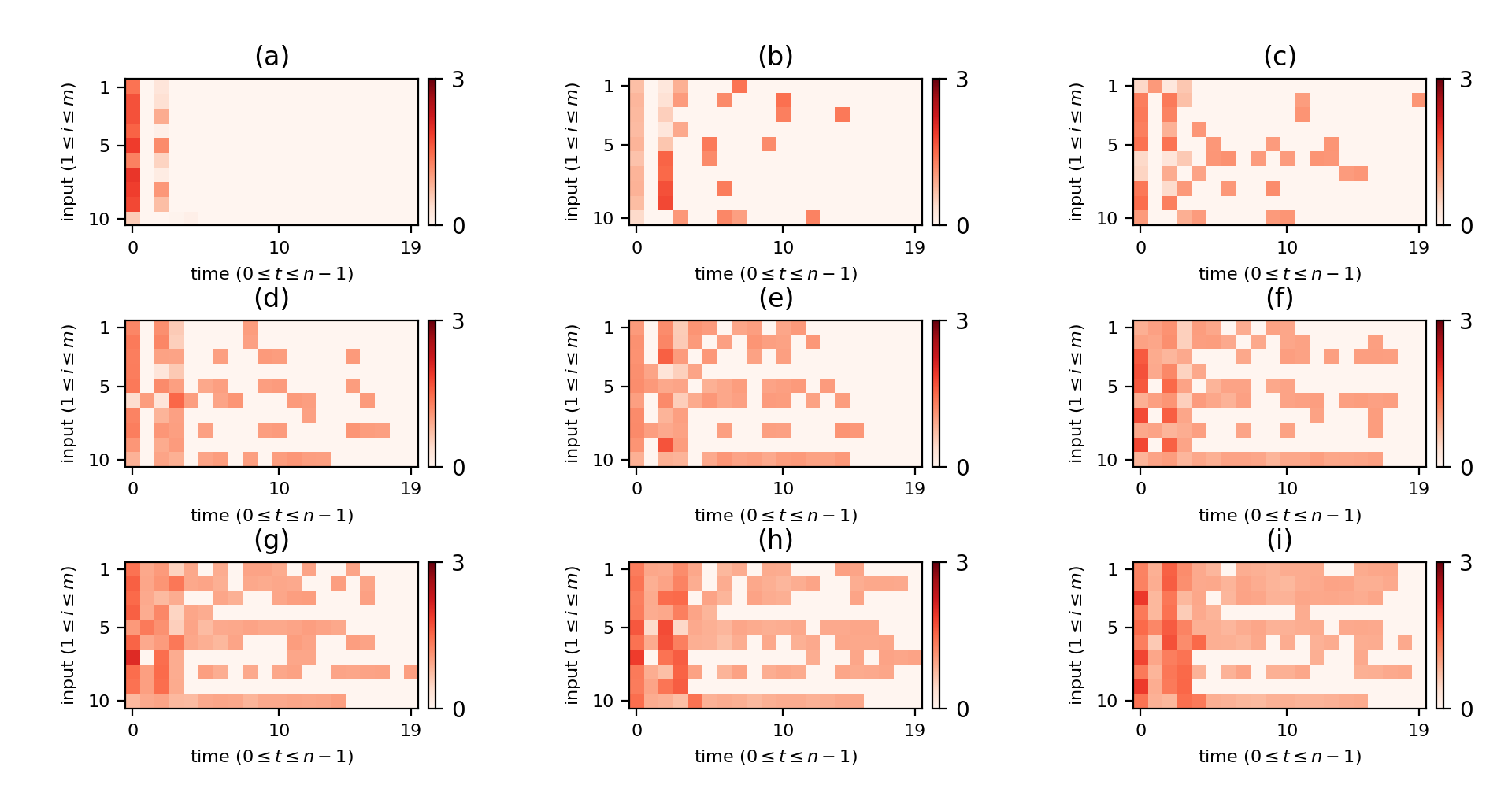} 
	\caption{\small Subplots (a)-(i) presents nine weighted sparse schedules for Example \ref{example_3} based on the proposed deterministic method (Algorithm \ref{al-max-1}) where $d \in \{ 1.05, 1.75,  2.30, 3.10 ,  3.95,  4.60 ,  5.25,  5.75, 6.35\}$ is the average number of active actuators at each time, respectively. The color of element $(i,k)$ is proportional to the scaling factor $s^2_i(k)$ where $i \in [10]$ and $k+1 \in [20]$. }
	\label{fig:sparse_power}
\end{figure*}

\begin{figure}[h]
	\centering
	 \psfrag{y}[h][h]{ \footnotesize{control inputs}}  
	 \psfrag{x}[t][t]{ \footnotesize{time}}      
	\includegraphics[trim = 0.5 0.5 0.5 0.5, clip,width=.4 \textwidth]{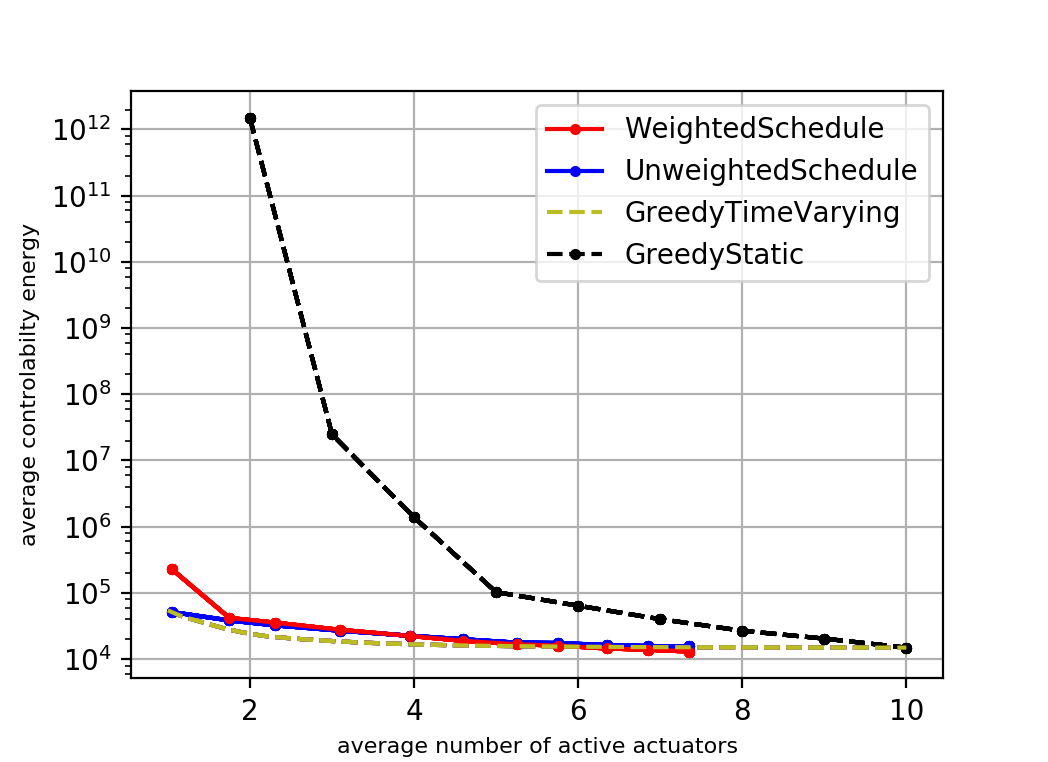} 
	\caption{\small  This plot compares four different methods (Algorithms \ref{al-max-1}, \ref{al-unweighted}, \ref{alg:greedy} and \ref{alg:greedyVarying}) for obtaining sparse actuator schedules of the 10-machine New England Power System in Example \ref{example_3}.	The plot presents the values of average controllability energy (A-optimality) versus the average number of active actuators at each time ($d$). }
	\label{fig:sparse_power_2}
\end{figure}

\begin{example}[Power Network]\label{example_3}

The problem is to select a set of generators to be involved in the wide-area damping control of power systems. We apply our sparse scheduling approach on the IEEE 39-bus test system (a.k.a. the 10-machine New England Power System; see Fig. \ref{fig:IEEE-39}) \cite{atawi2013advance,liu2017minimal}. The single line diagram presented in this figure comprises generators ($G_i$ where $i \in [10]$), loads (arrows), transformers (double circles), buses (bold line segments with number $i \in [39]$), and lines between buses (see \cite{atawi2013advance,liu2017minimal}).

The goal of the wide-area damping control is to damp the fluctuations between generators and synchronize all generators.  The voltage at each generator is adjusted by the control inputs {(e.g., HVDC lines and storages)} to regulate the power output.

We start with a model representing the interconnection between subsystems. Consider the swing dynamics
	\[ m_i \ddot{\theta_i} + d_i \dot{\theta_i} ~=~- \sum_{j \sim i} k_{ij}(\theta_i - \theta_j)+u_i,\]
where $\theta_i$ is the rotor angle state and $w_i := \dot \theta_i$ is the frequency state of generator $i$.
We assume this power grid model consists of $n=10$ generators \cite{liu2017minimal, atawi2013advance}. 
The state space model of the swing equation used for frequency control in power networks can be written as follows 
\begin{eqnarray*}
	\left[ \begin{array}{ccc}
	\dot \theta(t) \\
	\dot w(t) \end{array} \right]  &=& \left[ \begin{array}{ccc}
		0 & I  \\
		-M^{-1}L & -M^{-1} D  \end{array} \right] \left[ \begin{array}{ccc}
		\theta(t) \\
		w(t) \end{array} \right]\nonumber\\
		&&~+~\left[ \begin{array}{ccc}
		0\\
		M^{-1} \end{array} \right] u(t) \label{formation-1}
		\\
 	y(t) &=& \left[ \begin{array}{ccc}
 		\theta(t) \\
 		w(t) \end{array} \right]  	
\end{eqnarray*}
where $M$ and $D$ are diagonal matrices with inertia coefficients and damping coefficients of generators and their diagonals, respectively. 

We assume that both rotor angle and frequency are available for measurement at each generator. This means each subsystem in the power network has a phase measurement unit (PMU). The PMU is a device that measures the electrical waves on an electricity grid using a common time source for synchronization. The system is discretized to the discrete-time LTI system with state matrices $A$, $B$, and $C$ and the sampling time of $0.2$ second (the matrices are borrowed from \cite{ghazal}). 

Fig. \ref{fig:sparse_power} depicts nine sparse schedules based on the proposed deterministic method (Algorithms \ref{al-max-1}) for different values of $d$. The sparsity degree of each schedule is captured by $d$. As $d$ increases the number of non-zero scalings (i.e., activations) increases while the controllability metric decreases (improves).
Fig. \ref{fig:sparse_power_2} compares the results of Algorithms \ref{al-max-1}, \ref{al-unweighted}, \ref{alg:greedy}, and \ref{alg:greedyVarying}.	The plot presents the values of the average controllability energy (A-optimality) versus the average number of active actuators. {To have a fair comparison, we normalize the resulting schedules of all the methods such that the sum of all the scalings satisfies $\sum_{k=0}^{n-1} \sum_{i=1}^m s_i^2(k) = nd$.}

As one expects, Algorithms \ref{al-max-1}, \ref{al-unweighted}, and \ref{alg:greedyVarying} outperform  Algorithm \ref{alg:greedy}.
One observes that Algorithms  \ref{al-max-1}, \ref{al-unweighted} perform nearly as optimal as the time-varying greedy method \ref{alg:greedyVarying}; however, based on our results, we have theoretical guaranteed performance bounds for Algorithms  \ref{al-max-1} and \ref{al-unweighted}.
Furthermore, the usefulness of Algorithms \ref{al-max-1}, \ref{al-unweighted} accentuates itself when the number of active actuators on average is not too small; and potentially can result in a better solution compare to Algorithm \ref{alg:greedyVarying} (see Fig. \ref{fig:sparse_power_2}).

\end{example}

\section{Concluding Remarks} 

In this paper, we have shown how recent advances in matrix reconstruction and graph sparsification literature can be utilized to develop subset selection tools for choosing a relatively  small subset of actuators to approximate certain  controllability measures. Current approaches based on polynomial time relaxations of the subset selection problem require an extra multiplicative factor of $\log n$ sensors/actuators times the minimal number in order to just maintain controllability/observability. Furthermore, when the control energy is chosen as the cost, submodularity-based approaches fail to guarantee the performance using greedy methods.
In contrast, we show that there exists a polynomial-time actuator schedule that on average selects only a constant number of actuators at each time, to approximate controllability measures. Similar results can be developed for the sensor selection problem.
 A potential future direction is to see whether this approach can be used to develop an efficient scheme for minimal reachability problems.

\begin{spacing}{1}
\bibliography{main_Milad}
\end{spacing}

\vspace{-1.cm}

\begin{IEEEbiography}[{\includegraphics[width=1in,height=1.25in,clip,keepaspectratio, trim= 0 0 0 0 ]{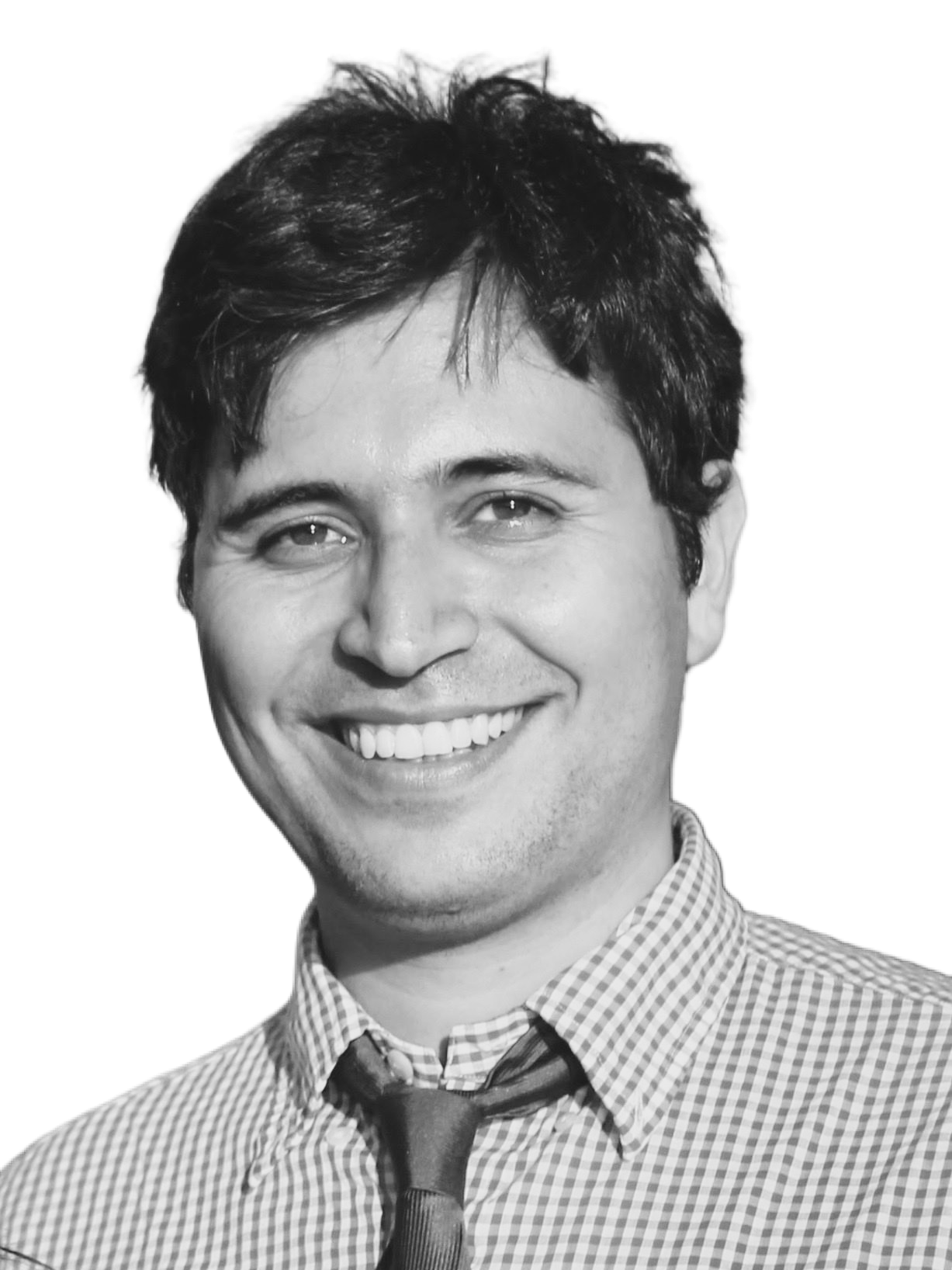}}]{Milad Siami} 
(S'12-M'18) received his dual B.Sc. degrees in electrical engineering and pure mathematics from Sharif University of Technology in 2009, M.Sc. degree in electrical engineering from Sharif University of Technology in 2011. He received his M.Sc. and Ph.D. degrees in mechanical engineering from Lehigh University in 2014 and 2017 respectively. 
From 2009 to 2010, he was a research student at the Department of Mechanical and Environmental Informatics at the Tokyo Institute of Technology, Tokyo, Japan. He was a postdoctoral associate in the Institute for Data, Systems, and Society at MIT, from 2017 to 2019. He is currently an Assistant Professor with the Department of Electrical \& Computer Engineering, Northeastern University, Boston, MA
His research interests include distributed control systems, distributed optimization, and applications of fractional calculus in engineering. 
Dr. Siami received a Gold Medal at National Mathematics Olympiad, Iran (2003) and the Best Student Paper Award at the 5th IFAC Workshop on Distributed Estimation and Control in Networked Systems (2015). Dr. Siami was awarded  RCEAS Fellowship (2012), Byllesby Fellowship (2013), Rossin College Doctoral Fellowship (2015), and Graduate Student Merit Award (2016) at Lehigh University.
\end{IEEEbiography}

\vspace{-1.cm}

\begin{IEEEbiography}[{\includegraphics[width=1in,height=1.25in,clip,keepaspectratio,  trim= .6in 0 .6in  0]{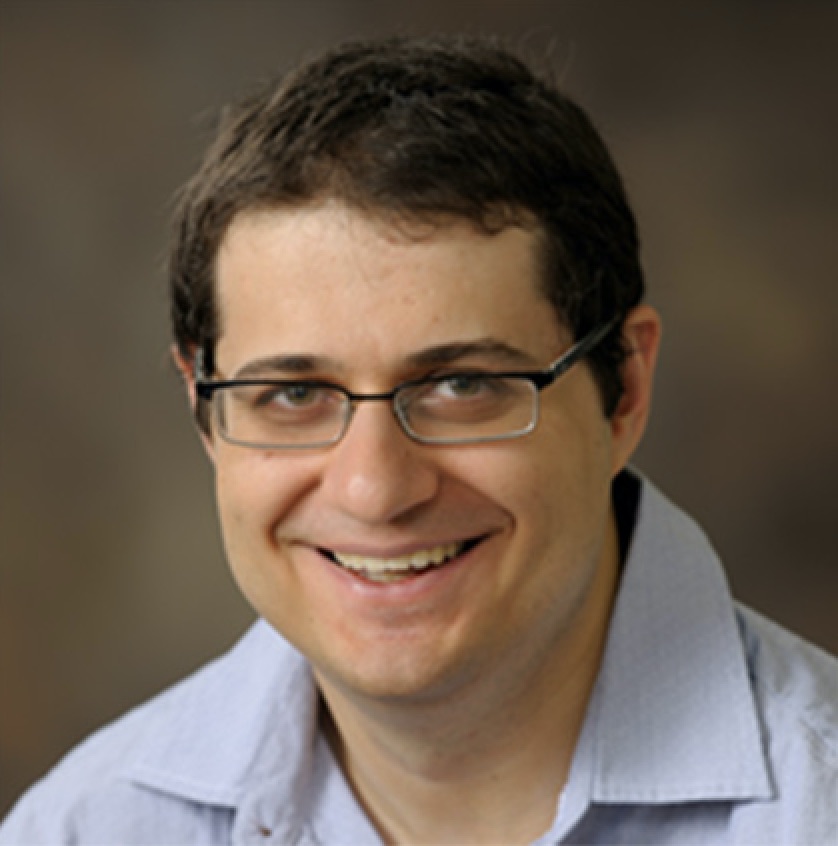}}]{Alex Olshevsky} received the B.S. degree in applied mathematics and in electrical engineering from the Georgia Institute of Technology, Atlanta, GA, USA, both in 2004, and the M.S. and Ph.D. degrees in electrical engineering and computer science from the Massachusetts Institute of Technology, Cambridge, MA, USA, in 2006 and 2010, respectively.
He is currently an Associate Professor at the Department of Electrical and Computer Engineering, Boston University, Boston, MA, USA. Prior to this position, he was a faculty member at the
University of Illinois at Urbana-Champaign. He was a Postdoctoral Scholar in the Department of Mechanical and Aerospace Engineering, Princeton University, from 2010 to 2012 before joining the University of Illinois at Urbana-Champaign in 2012. His research interests are in control theory, optimization, and machine learning, especially in distributed, networked, and multi-agent settings.
Dr. Olshevsky received the NSF CAREER Award, the Air Force Young Investigator Award, the ICS Prize from INFORMS for best paper on the interface of operations research and computer science, and a SIAM Paper Prize for annual paper from the SIAM Journal on Control and Optimization chosen to be reprinted in SIAM Review, and an award in 2019 from the International Medical Informatics Association for best paper on Clinical Research Informatics.  
\end{IEEEbiography}

\vspace{-1.cm}

\begin{IEEEbiography}[{\includegraphics[width=1in,height=1.25in,clip,keepaspectratio,  trim= .2in .1in .2in  .1in]{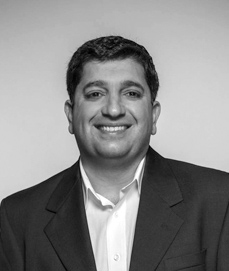}}]{Ali Jadbabaie}
(S'99-M'08-SM'13-F'15) received the B.S. degree from Sharif University of Technology, Tehran, Iran, the M.S. degree in electrical and computer engineering from the University of New Mexico, Albuquerque, NM, USA, and the Ph.D. degree in control and dynamical systems from California Institute of Technology, Pasadena, CA, USA. He is the JR East Professor of engineering, the Associate Director of the Institute for Data, Systems and Society, and the Director of the Sociotechnical Systems Research Center, MIT, Cambridge, MA, USA. He holds faculty appointments in
the Department of Civil and Environmental Engineering and is a Principal Investigator in the Laboratory for Information and Decision Systems. He was a Postdoctoral Scholar at Yale University before joining the faculty at Penn in July 2002. Prior to joining MIT faculty, he was the Alfred Fitler Moore Professor of network science and held secondary appointments in computer and information science and operations, information, and decisions in the Wharton School. His current research interests include the interplay of dynamic systems and networks with specific emphasis on multiagent coordination and control, distributed optimization, network science, and network economics. He is the Inaugural Editor-in-Chief of the IEEE Transactions on Network Science and Engineering and an Associate Editor of the Informs Journal Operations Research. He received the National Science Foundation Career Award, an Office of Naval Research Young Investigator Award, the O. Hugo Schuck Best Paper Award from the American Automatic Control Council, the George S. Axelby Best Paper Award from the IEEE Control Systems Society, and the 2016 Vannevar Bush Fellowship from the office of Secretary of Defense.
\end{IEEEbiography}

\end{document}